\documentclass[12pt]{article}
\usepackage{mathptmx}
\usepackage[utf8]{inputenc} %
\usepackage[T5,T1]{fontenc}

\usepackage{geometry}
\usepackage{color}
\usepackage{float}
\usepackage{mathtools}
\usepackage{amsmath}
\usepackage{amssymb}
\usepackage[retainorgcmds]{IEEEtrantools}
\usepackage{graphicx}

 \usepackage{natbib}
 \definecolor{upf}{RGB}{192,0,37}

\usepackage{subcaption}

 \usepackage[colorlinks=true,
            linkcolor=upf,
            urlcolor=upf,
            citecolor=upf]{hyperref}

\makeatletter

\usepackage{PlayfairDisplay} 
\usepackage{mathpazo}        

\usepackage{fullpage}

\usepackage{eurosym}
\usepackage{empheq}
\usepackage[normalem]{ulem}
\usepackage{color}
\usepackage{url}
\usepackage[section]{placeins}
\usepackage[doublespacing]{setspace}
\usepackage{booktabs}
\usepackage{caption}\usepackage{fixltx2e}\usepackage[flushleft]{threeparttable}
\usepackage[bottom]{footmisc}
\usepackage{array}
\usepackage{longtable}\usepackage{caption}
\usepackage[labelfont=bf]{caption}
\usepackage[toc,page]{appendix}
\usepackage[autostyle]{csquotes}
\usepackage{amsthm}

\newtheorem{theorem}{Theorem}[section]

\newtheorem{lem}[theorem]{Lemma}
\newtheorem{prop}{Proposition}

\newcounter{deferred}
\newcommand{\deferred}[2][]{%
  \ifstrempty{#1}
    {\stepcounter{deferred}\expandafter\gdef\csname temp\arabic{deferred}\endcsname{#2}}
    {\expandafter\gdef\csname temp#1\endcsname{#2}}%
}
\newcommand{\shownow}[1]{\csname temp#1\endcsname}

\makeatother

\begin{document}

\onehalfspace

\title{\textbf{{Sorting along Business Cycles}}\thanks{We thank seminar and conference participants at Dongbei University of Economics and Finance, and Liaoning University.  Gola gratefully acknowledges the support of the UKRI Horizon Europe Guarantee (Grant Ref: EP/Y028074/1).  The usual disclaim applies.}\textbf{\ }}

	\author{Paweł Gola\thanks{University of Edinburgh, 
 \protect\href{mailto:pawel.gola at ed.ac.uk}{pawel.gola at ed.ac.uk}}
		\and Haozhou Tang\thanks{Dongbei University of Finance and Economics, \protect\href{mailto:haozhoutang\%dufe.edu.cn}{haozhoutang@dufe.edu.cn}. }}

	\date{\today}
 \maketitle   

\begin{abstract}
We develop an analytically tractable model featuring heterogeneous workers and firms, where labor markets clear through a one-to-many sorting mechanism. Firms determine both the number and composition of their employees, shaping (1) the income distribution among workers and (2) the productivity distribution across firms. We study business cycles driven by market efficiency shocks that disproportionately benefit more productive firms. The model's implications are consistent with empirical regularities on the cyclical behavior of wage and productivity distributions.

 \end{abstract}

\normalsize{ \noindent{\textit{Keywords: sorting, wage inequality, productivity, business cycles }} }

\noindent{\textit{JEL Classification: C78, J21, D24, D31, E32} }

\newpage

\section{Introduction}

A prevailing convention in business-cycle studies is to treat firm-level productivity as exogenous, determined outside the firm's recruitment decisions. Firms are typically modeled as passive takers of their productivity draws, making hiring and investment decisions conditional on those primitives.  Yet this view abstracts from a fundamental aspect of how firms actually operate: productivity is shaped by whom firms hire. A firm's capabilities hinge on the talents it brings in---whether a CEO who sets strategic direction, a CFO who manages financial structure, R\&D personnel who drive innovation, or sales teams who determine market reach. In many cases, hiring decisions are not responses to productivity; they are determinants of it. 
This perspective highlights that productivity is an equilibrium outcome of how heterogeneous firms and workers sort in the labor market.

Recognizing productivity as an equilibrium sorting outcome raises natural questions: how are the distribution of wages and firm-level productivity connected, and how do they evolve over the business cycle?
A salient feature of the productivity distribution is that its dispersion is countercyclical---widening in recessions and narrowing in booms.\footnote{See, for example, \cite{bachmann2014}, \cite{kehrig2015}, \cite{bloom2018}, and \cite{cunningham2023}.}  This variation is typically taken as exogenous, often attributed to uncertainty or volatility shocks.  
In contrast, wage inequality, measured as the variance of log wages, exhibits a distinct cyclical pattern.  As shown in Figure \ref{fig:wage inequality}, the left panel reveals a long-run upward trend in wage dispersion from \cite{song2019}, along with clear cyclical movements: wage inequality tends to rise in economic expansions and contract in recessions, highlighted by the NBER-dated shaded regions. The right panel, which plots detrended wage inequality and detrended GDP, makes the comovement more transparent---inequality tends to track the business cycle. Notably, the patterns differ across episodes. For instance, in the late 1990s, this comovement was less pronounced: wage inequality remained below its own trend despite the growth of real GDP. By contrast, wage inequality dropped sharply during the Great Recession.\footnote{Throughout this paper, we focus on wage inequality rather than income inequality, which includes all sources of income. The figure uses wage income, i.e., labor earnings, and thus only accounts for employed individuals, whereas income inequality also includes the unemployed---potentially leading to very different patterns. Additionally, different definitions and measures of income inequality can produce substantially different results. See, for example, \cite{piketty2003} and \cite{heathcote2020}.} 

Overall, the data suggest a potential contrast between the two dispersions: while the dispersion of wages---reflecting worker performance---tends to increase in expansions, the dispersion of firm-level productivity---reflecting firm performance---tends to rise in downturns. 
These patterns
highlight an important feature for understanding firm-worker interactions over the business cycle.

\begin{figure}[H]
    \centering
    \includegraphics[
        width=1\textwidth,
        trim=0cm 9.8cm 0cm 9.5cm,
        clip
    ]{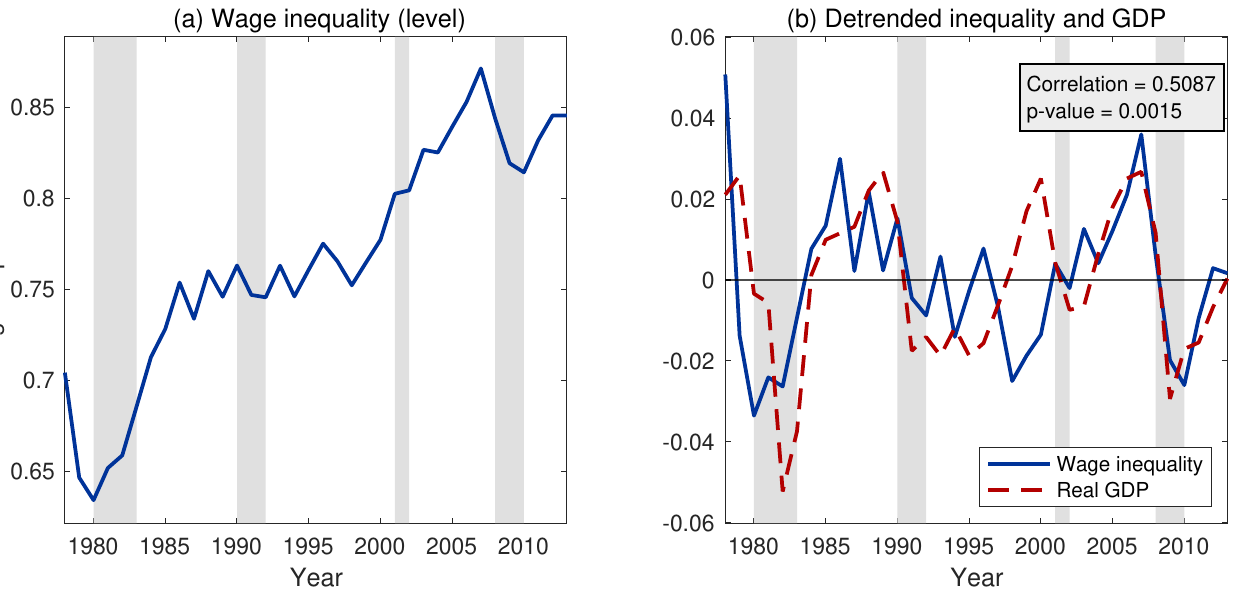}
    \caption{Wage inequality over business cycles}
    \caption*{\footnotesize\emph{Notes:} Shaded areas indicate NBER recessions. 
    Wage inequality is measured by variance of individual log wages. 
    Both series in panel (b) are detrended.\\  
    \emph{Source:} \cite{song2019} and authors' calculations.}
\label{fig:wage inequality}
\end{figure}

Motivated by these patterns, we introduce a framework with heterogeneous firms and workers in which labor markets clear through a one-to-many matching mechanism. Firms differ in their underlying types, and workers differ in their skill levels. Firms choose both the number and type of their employees, shaping the distributions of wages and productivity, which exhibit Pareto tails in equilibrium. Apart from the introduction of worker heterogeneity and one-to-many sorting, the model is entirely standard, featuring CES demand and decreasing-returns Cobb-Douglas production technology. Crucially, productivity emerges from the equilibrium sorting between firms and workers rather than being imposed exogenously. In this sense, the model generalizes the standard heterogeneous-firm framework: without worker heterogeneity or sorting, it reduces to the canonical setting where productivity is given.

Sorting between firms and workers is sensitive to aggregate conditions: changes in the macro environment can shift the composition of matches and thus alter the distributions of wages and productivity. We introduce a reduced-form market efficiency shock that operates through time-varying firm-specific wedges on labor and capital prices, representing distortions that fluctuate over the business cycle. This reduced-form shock may capture the compound effects of several underlying structural shocks and disproportionately affects high-type firms. It is designed to generate business-cycle fluctuations that reproduce procyclical reallocation of labor to more advanced firms, a key cyclical pattern observed in the data.\footnote{See, for example, \cite{moscarini2012}, \cite{haltiwanger2018}, \cite{crane2023}, and \cite{haltiwanger2025}.}

We analytically characterize how the reduced-form market efficiency shocks shape the cross-sectional distributions of wages and firm-level productivity. A positive shock induces high-type firms to expand more aggressively, shifting the composition of jobs toward higher-quality positions and thereby widening wage dispersion. Simultaneously, due to this relative expansion, these firms are matched with relatively lower-type workers, pulling their realized productivity closer to that of low-type firms, which shrink in relative size, and thus dampening the dispersion of firm-level productivity.  This mechanism demonstrates how shifts in the allocation of talent across firms can generate opposing cyclical movements in wages and productivity, even when underlying technologies are unchanged.
It is consistent with the findings in \cite{crane2023},  who document that during downturns high-rank workers are more likely to work at low-rank firms, while low-rank workers are less likely to be employed by high-rank firms.

While the mechanism described above can generate opposing cyclical movements in wage and productivity dispersions, not necessarily all recessions exhibit this disparity. To understand the broader range of patterns, we also examine the effects of several structural shocks. In particular, we consider aggregate productivity shocks and second-moment shocks that alter the variance of firm types and firm-level distortions. This allows us to analyze how different shocks shape the joint dynamics of wage and productivity dispersions over the business cycle.

\paragraph{Literature review}

The main conceptual contribution of this work is the finding that, under reasonable assumptions, changes in the matching function have opposing effects on the distributions of wages and measured productivity. In particular, the result that improvements in workers' matches decrease the inequality of measured productivity is, to the best of our knowledge, completely novel. Combined with the well-known fact that, in a model with worker-firm type complementarities, improvements in workers matches increase wage inequality \citep{Costinot2010}, this implies our main result: cyclical fluctuations in sorting cause the procyclical fluctuations of wage inequality and the counter-cyclical behavior of productivity dispersion.

Our contributions to the literature on labour market sorting extend beyond these novel comparative statics. First, while frictionless one-to-many sorting models (see \cite{eeckhout2018} for the most comprehensive theoretical treatment) have been used to study a wide array of applied questions in labour economics \citep[e.g.,][]{Sattinger1975, Teulings1995, Teulings2005} and international trade \citep[e.g.,][]{Costinot2009, Costinot2010, Sampson2014, grossman2017, Choi2023}, we are the first to embed this framework into an otherwise canonical business cycle model with aggregate uncertainty and capital accumulation. Second, we develop a novel, tractable specification of the one-to-many sorting model, which admits a fully analytical solution to the sorting problem. While this specification draws inspiration from \cite{gola2023}, notably by adopting a similar production function, it improves on their approach considerably by allowing for exponentially distributed firm and worker types, which is precisely what produces Pareto tails of wage, productivity, and firm size distributions in equilibrium.

Finally, our paper contributes to the literature on heterogeneous firms in several ways. First, unlike canonical models that treat firm-level productivity as exogenous, we study the endogeneity of productivity through equilibrium sorting between heterogeneous firms and workers. This mechanism generates a rich mapping from firm-worker matches to the cross-sectional distributions of productivity and wages. In particular, it offers a novel interpretation of uncertainty or volatility shocks. While pioneering studies such as \cite{bloom2009}, \cite{bachmann2013, bachmann2014}, \cite{schaal2017}, \cite{bloom2018}, and \cite{arellano2019} introduce these shocks into heterogeneous-firm frameworks as exogenous drivers of fluctuations in productivity dispersion and aggregate dynamics, we show they can be partially reconciled as an endogenous outcome of time-varying worker-firm sorting.
Second, by incorporating worker heterogeneity, we extend the analysis to wage inequality and its cyclical behavior. Finally, the model remains highly tractable, allowing for analytical characterization of a broad range of cross-sectional and business-cycle implications.

The paper is organized as follows. Section \ref{sec:model} introduces the model, and Section \ref{sec:eqm} presents the equilibrium analysis. In Section \ref{sec:micro}, we show how aggregate conditions shape the cross-sectional distributions of wages and productivity. Section \ref{sec:4.3} discusses alternative drivers of business cycles, and Section \ref{sec:conclude} concludes.

\section{Model}\label{sec:model}
In this section, we develop a model that incorporates one-to-many sorting between heterogeneous firms and workers. Firms make decisions regarding both the quantity of labor, i.e., the number of workers to hire, and the quality of labor, i.e., the skill level or type of workers. In this framework, firm-level productivity is determined by not only firms' own type, but also their match with employee types.  
We demonstrate that our model nests a standard heterogeneous-firm model with exogenous idiosyncratic productivity.

Each firm is subject to firm-specific distortions, represented by labor and capital wedges. The distortions are i.i.d. across firms but are correlated with firm types. Dynamically, we introduce shocks to market allocative efficiency, modeled as a time-varying correlation between firm types and firm-specific distortions. 
\subsection{Household}
Time is discrete and runs from zero to infinity.  
A representative household consists of a unit continuum of members, indexed by $i\in[0,1]$, each endowed with one unit of labor and skill type follows exponential distribution $x_i\sim \text{Exp}\left(\lambda_x\right)$. The household maximizes lifetime utility function
$$
\max E_0 \sum_{t=0}^{\infty} \beta^t \log C_t
$$
where $C_t$ denotes the aggregate consumption, $\beta$ represents the discount factor.

The household owns all firms, as well as capital and labor inputs. It is subject to the following budget constraint:
\begin{align}
 C_t+K_{t+1}-(1-\delta)K_t=\int_0^1w_{it}di+R_tK_t+D_t,   
\end{align}
where $K_t$ denotes the aggregate capital stock, $\delta$ represents its depreciation rate, $w_{it}$ denotes the wage received by household member $i$, $R_t$ represents the capital price, and $D_t$ represents operation profits from firms. Each family member works independently to earn wages, but their income is pooled together when making decisions about consumption and savings.

\subsection{Production}

\paragraph{Final good producer} There is a competitive sector of final good producers, all of which produce the homogeneous numeraire good  by combining differentiated intermediate goods $j\in[0,1]$
$$
Y_t =\left(\int_0^1 Y_{jt}^{1-\frac{1}{\xi}} d j\right)^{\frac{\xi}{\xi-1}}
$$
The price of final good is normalized to one.

\paragraph{Intermediate good producers} There is a unit continuum of intermediate firms. 
Production function follows\footnote{For notation ease, we omit time and firm-specific subscripts from the functional inputs in the following analysis.}
\begin{align}
Q_t(l, k, x, \theta)=A_t q(x, \theta) k^{\alpha}  l^\gamma,\label{production}
\end{align}
where $\theta$ and $x$ respectively denote the type of the firm and the type of worker that is matched with the firm, $k$ and $l$ represent the capital and labor input.  Firm type $\theta$ follows an exponential distribution $\theta\sim \mathrm{Exp}\left(\lambda_\theta\right)$. We assume that $\alpha+\gamma<1$. $A_t$ denotes a common productivity component that is identical across every firm. The idiosyncratic productivity component $q(x, \theta)$ follows
\begin{align}
    q(x, \theta) =\exp\left\{x^\psi\theta^{1-\psi}\right\}, \label{prod2}
\end{align}
where $\psi\in[0,1]$. Now, productivity depends not only on the firm's own type $\theta$ but also on the type of workers $x$ it employs. As discussed below, aggregate conditions affect the firm's hiring decisions, which in turn shape productivity.

Later we will discuss the economic interpretations for each of the functional parameters. For now, it is worth noting that a standard Cobb-Doulas function can be viewed as a special case of production function (\ref{production}). Specifically, if $\psi=0$, idiosyncratic productivity  is equal to $e^{\theta}$, independent of worker types. In this case, the production function (\ref{production}) is mathematically equivalent to a standard Cobb-Douglas function with decreasing-returns-to-scale (henceforth DRS): $e^{\theta}k^\alpha l^\gamma$, where $e^{\theta}$ can be interpreted as firm-level idiosyncratic productivity and is itself Pareto distributed.  Work type $x$ does not matter for the production, and all workers receive the same wage: no wage heterogeneity, just as in a standard heterogeneous-firm model.

\paragraph{Distortions and market efficiency shcoks}
Intermediate producers face demand schedule
\begin{align}
     Y_{jt}=(P_{jt})^{-\xi}Y_t. \label{demand}
\end{align}
At time $t$, firms compete in the labor market by offering wage $w_t(x)$ to workers of type $x$. We follow \cite{restuccia08} and \cite{hsieh09} by incorporating market frictions in the form of reduced-form, firm-specific wedges in capital and labor prices. These wedges introduce distortions that capture inefficiencies in resource allocation across firms. Given demand $y$, firms minimize cost by solving the following Lagrangian problem
\begin{align}
\min _{l, k,m} \tau_1  w_t(m)l+\tau_2 R_tk-\chi\left[Q_t(l, k, m, \theta)-y\right].\label{cost}
\end{align}
The firm-specific wedges $\tau_1$ and $\tau_2$ represent distortions arising from market frictions.  Lagrangian multiplier $\chi$ denotes the marginal cost.

Throughout the paper, we focus on labor market frictions and fluctuations in labor market efficiency. Specifically, we assume that labor wedge is correlated with firm type and follows
\begin{align}
    \tau_1=\exp(z_t\theta+\epsilon_1).\label{wedge}
\end{align}
where $z_t$ is constant across firms, while $\epsilon_1\sim N\left(0,\sigma_1^{2}\right)$ represents a firm-specific component that is i.i.d. across firms.
Thus, $z$ reflects the relationship between the (log of) distortion $\tau_1$  and firm type $\theta$. In terms of capital market, we assume that capital wedge follows $\tau_2=\exp(\epsilon_2)$, where $\epsilon_2\sim N\left(0,\sigma_2^{2}\right)$ is i.i.d. across firms. For simplicity, we omit the correlation between capital wedges and firm types. However, the analysis can be readily extended to include type-dependent capital market distortions by modeling capital wedges, similar to (\ref{wedge}), as being correlated with firm types.

Throughout the paper, we assume that $z_t>0$, $\forall t$. This assumption has two considerations. First, as discussed in Section \ref{sec:focs}, it implies that larger firms tend to allocate a smaller share of their revenue to wages, a pattern that aligns with well-documented empirical findings.\footnote{See \cite{autor2020}.} Second, this assumption is consistent with empirical evidence on the correlation between distortions and firm size and productivity, which shows that high-productivity firms tend to face higher marginal input costs and distortions.\footnote{See, for example, \cite{hsieh2014}, \cite{gourio2014}, \cite{garicano2016}, and \cite{bento2017}.} In particular, \cite{bento2017} estimate the elasticity of distortions with respect to productivity across countries and find values ranging from $0.22$ to $0.74$.

We follow \cite{chari2007} by introducing time-varying labor wedges into the model. Specifically, we assume  $z_t$ follows a Markov process. Fluctuations in $z_t$ affect firms asymmetrically depending on their type $\theta$, with higher-$\theta$ firms being more exposed to these shocks. We intentionally refrain from microfounding this reduced-form shock; instead, one may interpret $z_t$ as capturing the combined effect  of multiple underlying structural forces that vary over the business cycle.

This modeling choice is consistent with empirical evidence showing that labor reallocation toward more productive firms is procyclical. Recent studies by \cite{crane2023} and \cite{haltiwanger2025}, using a variety of firm-level productivity measures, document that during expansions high-type firms expand relative to low-type firms by recruiting workers who would otherwise be employed by low-type firms.  Firm size can also serve as a proxy for productivity: \cite{moscarini2012} documents that employment growth in large firms is more strongly correlated with aggregate economic activity than growth in smaller firms. Furthermore, some evidence suggests that these larger, more productive firms are more sensitive to structural shocks, such as monetary policy shocks.\footnote{See, for example, \cite{ottonello2020}, \cite{morlacco2021}, and \cite{kroen2021}. However, this evidence is subject to debate, as some studies document opposite effects; for instance, \cite{gertler1994} and \cite{crouzet2020}.} Finally, \cite{buera2015} develop models showing that credit crunches can be manifested as time-varying input wedges and cause the labor share of the most productive firms to shrink.

Note that the purpose of this paper is not to identify the structural sources of aggregate fluctuations. Rather, we examine what happens when better firms expand more during booms---specifically, how such asymmetric responses shape the distribution of wages and firm-level outcomes over the business cycle.

\section{Equilibrium Analysis} \label{sec:eqm}
An equilibrium consists of: (1) a consumption (and saving) path that maximizes the representative household's lifetime utility and satisfies its intertemporal budget constraint;  2) optimal capital inputs $k_{jt}$, labor quantities $l_{jt}$, and labor types $x_{jt}$ that maximize each firm $j$'s profits;  and (3) prices for intermediate goods $P_{jt}$, the rental rate of capital, $R_t$, and wage schedule for each skill type $w_t(x)$ that clear markets for all $t$.  

To sum up, the equilibrium definition above is analogous to that in a standard stochastic neoclassical growth model with heterogeneous firms, except that we incorporate one-to-many sorting: firms' recruiting decisions now involve both the quantity and the quality of labor. Labor market clearing requires that the supply of each skill type equals its demand.

This section presents the equilibrium analysis. We defer the detailed analysis of cross-sectional distributions to Section \ref{sec:micro}.
\subsection{One-to-many Sorting} 
We first define a \textit{job offer} as $h = \theta$. The type of job offers is determined by the underlying firm type $\theta$. However, we introduce a separate notation $h$ to distinguish job offers from firm types, since the distribution of job offers generally differs from that of $\theta$. In particular, the distribution of $h$ also depends on firm size:
\begin{align*}
    f_t(h)=\int_{-\infty}^{\infty}\int_{-\infty}^{\infty} l_t(h,\epsilon_1,\epsilon_2)*\lambda_\theta\exp(-\lambda_\theta h)\mathrm{d} \Phi(\frac{\epsilon_1}{\sigma_1}) \mathrm{d} \Phi(\frac{\epsilon_2}{\sigma_2})
\end{align*}
where $f_t(h)$ denotes the probability density function (PDF) of $h$, and $l_t(\theta,\epsilon_1,\epsilon_2)$ represents firm size. It is immediate that $h$ and $\theta$ may follow different distributions due to the incorporation of firm size. 

Our framework thus features a \textit{one-to-many} sorting structure, in which firms endogenously determine their size and post multiple job offers, in contrast to a one-to-one sorting setup where each firm corresponds to a single job. To see this more clearly, consider a conventional one-to-one sorting framework in which all firms have equal unit size, i.e., $l_t(\theta,\epsilon_1,\epsilon_2) = 1$. In that case, we have $f_t(h) = \lambda_\theta e^{-\lambda_\theta h}$, implying that $h$ and $\theta$ follow the same distribution.

To facilitate the analysis, we conjecture that
\[
f_t(h)=\lambda_t\exp(-\lambda_th),
\]
that is, the number of job offers follow an exponential distribution with a time-varying parameter $\lambda_t$. Throughout most of this section, we take this as given. In Section \ref{sec:3.3},
we verify this conjecture and characterize the equilibrium value of $\lambda_t$.

The function $f_t(h)$ can also be interpreted as the \textit{labor demand} generated by firms of type $\theta = h$. On the other hand, the supply of type-$x$ workers is assumed to follow an exponential distribution, $x_i \sim \text{Exp}(\lambda_x)$. 
The corresponding labor supply function is
\[
S(x)\equiv-\lambda_x\exp(-\lambda_xx).
\]
Let $h = \mu_t(x)$ denote the matching function between a worker of type $x$ and a job offer $h$. 
Labor market clearing requires that $\forall x\in[0,\infty)$
\begin{align}
S(x)=D(x)\equiv-\lambda_t\mu_t'(x)\exp(-\lambda_t\mu_t(x)).\label{labormktclear}
\end{align}
From (\ref{labormktclear}) and that $\int_0^\infty D_t(x)dx=1$, we obtain
\begin{align}
\mu_t(x) = \frac{\lambda_x}{\lambda_t}x. \label{pam}
\end{align}
The matching function (\ref{pam}) reveals a positive assortative matching (PAM) pattern between workers and job offers (and thus firms): workers with higher $x$ are matched with firms offering higher $h$ (and equivalently higher $\theta$).

\subsection{Firms' problem}\label{sec:focs}
The firm level state variables can be summarized by $(\theta,\epsilon_1,\epsilon_2)$. Firm-level variables can be expressed as functions of these state variables. Demand schedule (\ref{demand}) can be rewritten into
\begin{equation}
     Q_{t}(\theta,\epsilon_1,\epsilon_2)=(P_t(\theta,\epsilon_1,\epsilon_2))^{-\xi}Y_t, 
\end{equation}
where the price is a constant markup over the marginal cost
\begin{align}
P_t(\theta,\epsilon_1,\epsilon_2)=\frac{\xi}{\xi-1}\chi_{t}(\theta,\epsilon_1,\epsilon_2).
\end{align}

Next we proceed to analyze the input choices of intermediate firms.

\paragraph{First-order conditions and wage schedule}
As to intermediate firms, the firm's first-order condition (FOC) with respect to labor input $l$ is given by
\begin{align}
   \chi A_t\gamma k^{\alpha}l^{\gamma-1}\exp(x^{\psi} h^{1-\psi})=\tau_1w(x), \label{eq: lFOC}
\end{align}
where $\chi$ denotes the associated marginal cost. Note that equation (\ref{eq: lFOC}) implies that the labor share of a company is equal to $\frac{\xi-1}{\xi\tau_1}$, which is \textit{negatively} correlated with the type of firm $\theta$.

Similarly, the FOC with respect to worker type $x$ yields
\begin{align}
   \chi A_t \psi k^{\alpha}l^{\gamma}\exp(x^{\psi} h^{1-\psi})x^{\psi-1} h^{1-\psi}=\tau_1w'_t(x)l.
\end{align}
Dividing the second condition by \eqref{eq: lFOC} gives 
\begin{align}
    \frac{w'_t(x)}{w_t(x)}=\frac{\psi}{\gamma}x^{\psi-1} h^{1-\psi},\label{eq: wagederiv}
\end{align}
Applying the fundamental theorem of calculus, we obtain
\begin{align}
w_t(x)=w_t(0)\exp(\frac{\psi}{\gamma}(\frac{\lambda_x}{\lambda_t})^{1-\psi}x)
\end{align}

Taking logarithms gives
 \begin{align}\label{eq: wagex}
\ln(w_t(x))=\frac{\psi}{\gamma}(\frac{\lambda_x}{\lambda_t})^{1-\psi}x+\ln(w_t(0)),
\end{align}
a result that reveals log wages follow an exponential distribution---or, equivalently, that raw wage levels conform to a Pareto distribution.

Equation (\ref{eq: wagex}) characterizes how worker types are mapped to their wages. Workers with higher type $\theta$ consistently receive higher wages.  The steepness of this wage-worker type relationship hinges on two components, each capturing distinct economic tradeoffs.

First, the gradient increases with the ratio $\frac{\psi}{\gamma}$, a ratio determined by production technology. Here, $\psi$ quantifies the \textit{skill intensity} of worker contributions to firm productivity. Whereas $\gamma$ reflects \textit{scale intensity}, i.e. gains from expanding the workforce.  Their ratio encapsulates the core tradeoff firms face: prioritizing worker quality (skill) versus quantity (headcount). A high $\frac{\psi}{\gamma}$ ratio signals that quality dominates this tradeoff, forcing firms to compete more fiercely for high-type workers by offering a steeper wage premium tied to worker ability.

Second, the gradient is increasing in $(\frac{\lambda_x}{\lambda_t})^{1-\psi}$, a term that reflects the relative thickness of the job-supply and job-demand tails. The parameter $\lambda_t$ governs the tail thinness of the labor-demand distribution---smaller values imply a thicker upper tail and thus a larger mass of high-type job offers, while $\lambda_x$ plays the analogous role for labor supply, capturing the prevalence of high-skill workers. The ratio $\frac{\lambda_x}{\lambda_t}$ thus measures the relative scarcity of high-skill labor compared with high-type job offers: a higher value means firms face fiercer competition for top talent, pushing the wage gradient steeper to attract skilled workers.

From equilibrium matching (\ref{pam}), we derive the sorting relationship that links worker type to firm type: 
\begin{align}
h=\mu_t(x) = \frac{\lambda_x}{\lambda_t}x.\label{eq: sorting}
\end{align}
Combining (\ref{eq: wagex}) with \eqref{eq: sorting}, equation (\ref{eq: wagex}) can be equivalently rewritten as
 \begin{align}\label{eq: wageh}
\ln(w_t(\mu_t^{-1}(h)))=\frac{\psi}{\gamma}(\frac{\lambda_t}{\lambda_x})^{\psi}h+\ln(w_t(0)),
\end{align}
which characterizes how wages vary across different firms.

It is informative to examine firms' problem given capital stock $k$ and wedge $\tau_1$. Substituting \eqref{eq: wageh} into \eqref{eq: lFOC} yields the optimal labor input:

\begin{align}\label{eq: lh}
    l^*_t(\theta, k, \chi, \tau_1)=[\frac{w_t(0)\tau_1}{\gamma\chi A_tk^\alpha}]^{\frac{1}{\gamma-1}}\exp (\frac{\gamma-\psi}{\gamma(1-\gamma)}(\frac{\lambda_t}{\lambda_x})^{\psi}\theta)
\end{align}
Note that in our model $k$ and $\tau_1$ depend on state variables $(\theta,\epsilon_1,\epsilon_2)$. However, (\ref{eq: lh}) applies also to a scenario where firm-level capital stock and distortions are prefixed. The difference $\gamma-\psi$ captures the trade-off between worker quality and quantity. If $\gamma<\psi$, the demand for quality overwhelms the demand for scale: The wage schedule (\ref{eq: wagex}) is so steep that firms must shrink their size to afford higher worker quality, holding $k$ and $\tau_1$ fixed. In this case, high-type firms optimally operate with a small team of elite workers. This stands in contrast to a standard framework without worker heterogeneity (a special case of our model with $\psi=0$), where higher-type firms always hire more workers.

With (\ref{eq: wageh}) and firms' optimality conditions, we can readily express our variables of interests as functions of firm-level state variables.
\begin{lem}\label{lem: firmsolution}
   Suppose that in period $t$ jobs $h$ are exponentially distributed with parameter $\lambda_t$, that is $h \sim \text{Exp}\left(\lambda_t\right)$. Then the equilibrium output $Q$, rented capital $k$, marginal cost $\chi$ and labor hired $l$  at firm of type $(\theta, \epsilon_1, \epsilon_2)$ are as follows:
   \begin{align}
       Q_t(\theta, \epsilon_1, \epsilon_2)&=\bar{Q}_t\exp\left(\eta^Q \left(\eta_{\theta t}^Q\theta-\gamma \epsilon_1-\alpha\epsilon_2\right)\right), \label{eq: Qstar}\\
       k_t(\theta, \epsilon_1, \epsilon_2)&=\bar{k}_t \exp \left(\frac{\xi-1}{\xi} \eta^Q \left(\eta_{\theta t}^Q\theta-\gamma \epsilon_1-(\alpha+\frac{\xi}{\xi-1}\frac{1}{\eta^Q})\epsilon_2\right)  \right),\label{eq: kstar}\\ 
       \chi_t(\theta, \epsilon_1, \epsilon_2)&= \bar{\chi}_t \exp\left(-\frac{\eta^Q}{\xi} \left(\eta_{\theta t}^Q\theta-\gamma \epsilon_1-\alpha\epsilon_2\right)\right),\label{eq: chistar}\\ 
       l_t(\theta, \epsilon_1, \epsilon_2)&= \bar{l}_t\exp \left(\eta_{\theta t}^{l}\theta -\left(\frac{\xi-1}{\xi}\eta^Q\gamma+1\right) \epsilon_1-\frac{\xi-1}{\xi}\eta^Q\alpha \epsilon_2\right) ,  \label{eq: lhstar}
   \end{align}
   where

 \begin{IEEEeqnarray*}{rclrclrcl}
    \eta^Q&\equiv&\frac{\xi}{1+(1-\alpha-\gamma)(\xi-1)}, \quad&\eta_{\theta t}^Q &\equiv& -\gamma z_t+(1-\psi)(\frac{\lambda_t}{\lambda_x})^{\psi}, \quad&
    \eta_{\theta t}^{l}& \equiv&\frac{\xi-1}{\xi}\eta^Q\eta_{\theta t}^Q -z_t-\frac{\psi}{\gamma}(\frac{\lambda_t}{\lambda_x})^{\psi}. 
 \end{IEEEeqnarray*}
   
The time-dependent but type-independent constants $\bar{Q}_t, \bar{k}_t, \bar{\chi}_t,\bar{l}_t$ are defined in the proof of Lemma \ref{lem: firmsolution} (Appendix \ref{app:firmsolution}).
\end{lem}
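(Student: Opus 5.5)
The plan is to solve the firm's static problem in closed form, taking as given the wage schedule \eqref{eq: wageh} and the matching \eqref{eq: sorting} that follow from the exponential job distribution. The first step is to substitute the sorting rule into the firm's idiosyncratic productivity. A firm of type $\theta$ hires workers of type $x=\mu_t^{-1}(\theta)=\frac{\lambda_t}{\lambda_x}\theta$. Hence
\[
x^\psi\theta^{1-\psi}=\left(\frac{\lambda_t}{\lambda_x}\right)^{\psi}\theta,
\]
and the wage it pays is $w_t(0)\exp\left(\frac{\psi}{\gamma}\left(\frac{\lambda_t}{\lambda_x}\right)^{\psi}\theta\right)$. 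Given these, the firm faces a standard Cobb--Douglas cost minimization. Its TFP is $A_t\exp\left(\left(\frac{\lambda_t}{\lambda_x}\right)^{\psi}\theta\right)$. Its effective labor price is $\tau_1 w = w_t(0)\exp\left(\left(z_t+\frac{\psi}{\gamma}\left(\frac{\lambda_t}{\lambda_x}\right)^{\psi}\right)\theta+\epsilon_1\right)$, and its effective capital price is $R_t e^{\epsilon_2}$. The envelope condition (the FOC in $x$) is already embedded in \eqref{eq: wagederiv}, so I only need the FOCs in $l$ and $k$.

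Second, I will derive the marginal cost. With decreasing returns, the cost function is
\[
C(Q)=\left(\frac{Q}{\text{TFP}}\right)^{1/(\alpha+\gamma)}\times\left(\tau_1 w/\gamma\right)^{\gamma/(\alpha+\gamma)}\left(\tau_2R/\alpha\right)^{\alpha/(\alpha+\gamma)}\times\text{const}.
\]
So $\chi$ is proportional to $Q^{\frac{1-\alpha-\gamma}{\alpha+\gamma}}$ times a log-linear function of $(\theta,\epsilon_1,\epsilon_2)$. In logs, the $\theta$-coefficient of the input-price part is
\[
-\left(\frac{\lambda_t}{\lambda_x}\right)^{\psi}+\gamma\left(z_t+\frac{\psi}{\gamma}\left(\frac{\lambda_t}{\lambda_x}\right)^{\psi}\right)=\gamma z_t-(1-\psi)\left(\frac{\lambda_t}{\lambda_x}\right)^{\psi}=-\eta^Q_{\theta t}.
\]
This is where the combination in $\eta^Q_{\theta t}$ comes from. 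The $\epsilon$-coefficients are $\gamma\epsilon_1+\alpha\epsilon_2$, all divided by $\alpha+\gamma$.

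Third, I will close the system with demand and markup pricing. From \eqref{demand} and $P=\frac{\xi}{\xi-1}\chi$, $\ln Q=-\xi\ln\chi+\text{const}$. Combining this with the log-linear expression for $\ln\chi$ gives a linear equation in $\ln Q$, which I solve. The coefficient on $\eta^Q_{\theta t}\theta-\gamma\epsilon_1-\alpha\epsilon_2$ becomes
\[
\frac{\xi/(\alpha+\gamma)}{1+\xi\frac{1-\alpha-\gamma}{\alpha+\gamma}}=\frac{\xi}{\alpha+\gamma+\xi(1-\alpha-\gamma)}=\eta^Q.
\]
This yields \eqref{eq: Qstar}, and $\chi\propto Q^{-1/\xi}$ yields \eqref{eq: chistar}. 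For the inputs, I use the fact that the expenditure on each factor is a fixed share of revenue, $\tau_2Rk=\alpha\chi Q$ and $\tau_1wl=\gamma\chi Q$, with $\chi Q\propto Q^{(\xi-1)/\xi}$. Then $\ln k=\frac{\xi-1}{\xi}\ln Q-\epsilon_2+\text{const}$, which reproduces \eqref{eq: kstar} after factoring out $\frac{\xi-1}{\xi}\eta^Q$. Similarly, $\ln l=\frac{\xi-1}{\xi}\ln Q-\ln(\tau_1 w)+\text{const}$. Its $\theta$-coefficient is $\frac{\xi-1}{\xi}\eta^Q\eta^Q_{\theta t}-z_t-\frac{\psi}{\gamma}\left(\frac{\lambda_t}{\lambda_x}\right)^\psi=\eta^l_{\theta t}$, and its $\epsilon_1$-coefficient is $-\left(\frac{\xi-1}{\xi}\eta^Q\gamma+1\right)$, matching \eqref{eq: lhstar}. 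All constants that depend on $A_t$, $Y_t$, $R_t$ and $w_t(0)$ are collected into $\bar Q_t,\bar k_t,\bar\chi_t,\bar l_t$.

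The main obstacle is not the algebra but the justification that a firm takes the equilibrium assignment as given: it hires exactly $x=\mu_t^{-1}(\theta)$, and its problem is concave in $x$, so that the FOC-based substitution is valid. I would check the second-order condition in $x$ using the log-linear wage \eqref{eq: wagex}. Since $x^\psi$ is concave for $\psi\le1$ while $\ln w$ is linear, the FOC identifies the maximizer. Beyond that, the remaining care is bookkeeping of the $\frac{\xi-1}{\xi}$ and $\eta^Q$ factors in the $\epsilon_2$ term of $k$.
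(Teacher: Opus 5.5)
Your proposal is correct and follows essentially the paper's route. You plug the sorting rule and wage schedule \eqref{eq: wageh} into the firm's problem, then combine the Cobb--Douglas input conditions with markup pricing and CES demand to get a log-linear equation in $Q$ with coefficient $\eta^Q$, and back out $\chi$, $k$ and $l$ from it. The only difference is that you package the first-order conditions into the unit-cost function, which shows $-\eta^Q_{\theta t}$ as the $\theta$-loading of the TFP-adjusted input-price index; the paper instead substitutes the conditional labor demand \eqref{eq: lh} and the capital first-order condition directly into the production function. All your coefficients check against \eqref{eq: Qstar}--\eqref{eq: lhstar}, and your convexity check in $x$ is a useful addition that the paper omits.
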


In equilibrium, if $\epsilon_1 = \epsilon_2 = 0$, the firm-level variables above, such as firm size $l$, follow a Pareto distribution, since the underlying firm type $\theta$ is exponentially distributed. When $\epsilon_1$ and $\epsilon_2$ are drawn from the assumed zero-mean normal distributions with nonzero variance, these variables follow a convolution of a Pareto distribution and a log-normal distribution. Importantly, because the log-normal component decays faster in the upper tail than the Pareto component, the resulting distributions retain Pareto tails, consistent with the fat-tailed firm size distributions observed in the data.

\deferred[lem: firmsolution]{\subsection{Proof of Lemma \ref{lem: firmsolution}\label{app:firmsolution}}
First, given $\bar{Q}_t$, we have
\begin{IEEEeqnarray*}{rClrCl}
    \bar{k}_t = \alpha\frac{\xi-1}{\xi}\frac{ \bar{Q}_t^{\frac{\xi-1}{\xi}}  Y_t^\frac{1}{\xi} }{R_t}, \qquad \bar{\chi}_t=\frac{\xi-1}{\xi}\bar{Q}_t^{-\frac{1}{\xi}} Y_t^\frac{1}{\xi} , \qquad  \bar{l}_t= \left[\bar{\chi_t}A_t\bar{k}_t^{\alpha}\gamma/w_t(0) \right]^\frac{1}{1-\gamma}.
\end{IEEEeqnarray*}
Using the identity $$\bar{Q}_t=A_t\bar{k}_t^\alpha\bar{l}_t,$$ we obtain
\[
\bar{Q}_t = \left[A_t \left(\frac{\gamma}{w_t(0)}\right)^\gamma\left(\frac{\alpha}{R_t}\right)^\alpha\left(\frac{\xi-1}{\xi}Y_t^\frac{1}{\xi}\right)^{\alpha+\gamma} \right]^{\eta^Q} .
\]

The markup for an intermediate good is a constant $\frac{\xi}{\xi-1}$. From \eqref{demand},
\begin{equation}\label{eq: chi}
    \chi_{jt}=\left(\frac{Q_{jt}}{Y_t}\right)^{-\frac{1}{\xi}}\frac{\xi-1}{\xi}
\end{equation}
The first-order condition with respect to capital yields:
\begin{equation}\label{eq: kFOC}
    k_{jt}=\frac{\alpha \left(\frac{Q_{jt}}{Y_t}\right)^{-\frac{1}{\xi}}\frac{\xi-1}{\xi} Q_{jt}}{\tau_{2jt} R_t}.
\end{equation}
Further, substituting \eqref{eq: lh} and \eqref{eq: wageh} into \eqref{eq: lFOC} gives
\begin{align}
    Q_{jt}=\chi_{jt}^{-1}\tau_{1jt}w_t(0)\exp\left(\frac{1-\psi}{1-\gamma}(\frac{\lambda_t}{\lambda_x})^{\psi}\theta_{jt}\right)\left[\left(\frac{\gamma A_t\chi_{jt}}{w_t(0)\tau_{1jt} )}\right)  k_{jt}^{\alpha} \right]^{\frac{1}{1-\gamma}}.
\end{align}
Next, substituting for $\chi$ from \eqref{eq: chi} and for $k$ from \eqref{eq: kFOC} results in
\begin{align*}
  Q_{jt}&=w_t(0)\tau_{1jt}\exp\left(\frac{1-\psi}{1-\gamma}(\frac{\lambda_t}{\lambda_x})^{\psi}\theta_{jt}\right)\\
  &\quad\times\left[\left(\frac{Q_{jt}}{Y_t}\right)^{-\frac{1}{\xi}}\frac{\xi-1}{\xi}\right]^{-1}\left[\left(\frac{\gamma A_t\left(\frac{Q_{jt}}{Y_t}\right)^{-\frac{1}{\xi}}\frac{\xi-1}{\xi}}{w_t(0)\tau_{1jt} )}\right) \left(\frac{\alpha \left(\frac{Q_{jt}}{Y_t}\right)^{-\frac{1}{\xi}}\frac{\xi-1}{\xi} Q_{jt}}{\tau_{2jt} R_t}\right)^{\alpha} \right]^{\frac{1}{1-\gamma}},  
\end{align*}
where
\[
 \tau_1=\exp(z_t\theta+\epsilon_1),
\]
\[
 \tau_2=\exp(\epsilon_2),
\]
and after simple rearrangements, \eqref{eq: Qstar} follows. Similarly, \eqref{eq: chistar} and \eqref{eq: kstar} follow from substituting \eqref{eq: Qstar} into \eqref{eq: chi} and \eqref{eq: kFOC}, respectively. Finally, \eqref{eq: lhstar} is the result of substituting \eqref{wedge}, \eqref{eq: kstar} and \eqref{eq: chistar} into \eqref{eq: lh}. 
}

\subsection{Equilibrium Job Distribution}\label{sec:3.3}
Next we proceed to 
characterize $\lambda_t$.
The PDF of $h$ follows
\begin{align}
   f_t(h)&= \int_{-\infty}^{\infty}\int_{-\infty}^{\infty} l_t(h,\epsilon_1,\epsilon_2)*\lambda_\theta\exp(-\lambda_\theta h)\mathrm{d} \Phi(\frac{\epsilon_1}{\sigma_1}) \mathrm{d} \Phi(\frac{\epsilon_2}{\sigma_2})\nonumber\\
   &=\lambda_\theta\overline{l}_t\int_{-\infty}^{\infty}\exp\left(-\left(\frac{\xi-1}{\xi}\eta^Q\gamma+1\right) \epsilon_1\right)\mathrm{d} \Phi(\frac{\epsilon_1}{\sigma_1})\int_{-\infty}^{\infty}\exp\left(-\frac{\xi-1}{\xi}\eta^Q\alpha \epsilon_2\right)\mathrm{d} \Phi(\frac{\epsilon_2}{\sigma_2})\nonumber\\
   &\times\exp\left(\eta_{\theta t}^{l}h-\lambda_\theta h\right)\nonumber\\
   &=\tilde{l}_t*\exp\left(\eta_{\theta t}^{l}h-\lambda_\theta h\right),
\end{align}
where $\tilde{l}_t$ is a constant independent to $h$. 

Therefore $\lambda_t$ solves the following equation
\begin{align}
    -\lambda_t&=\eta_{\theta t}^{l}-\lambda_\theta\nonumber\\
    &=\frac{\big((\xi-1)\big(\gamma-\psi(1-\alpha)\big)-\psi\big)}
{\gamma\big(1+(1-\alpha-\gamma)(\xi-1)\big)}
\left(\frac{\lambda_t}{\lambda_x}\right)^{\psi}
-\frac{1+(\xi-1)(1-\alpha)}{1+(1-\alpha-\gamma)(\xi-1)}\,z_t
-\lambda_\theta.\label{eqm:jobdist}
\end{align}
\begin{lem}
    There exists a unique solution $\lambda_t\in [0,\infty)$ to equation (\ref{eqm:jobdist}).\label{lem:jobdist}
\end{lem}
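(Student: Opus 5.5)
The plan is to reduce equation (\ref{eqm:jobdist}) to a one-dimensional fixed-point problem in $\lambda_t$ and use monotonicity or convexity to get existence and uniqueness. Write
\[
a\equiv\frac{(\xi-1)\big(\gamma-\psi(1-\alpha)\big)-\psi}{\gamma\big(1+(1-\alpha-\gamma)(\xi-1)\big)\lambda_x^{\psi}},\qquad
c_t\equiv\lambda_\theta+\frac{1+(\xi-1)(1-\alpha)}{1+(1-\alpha-\gamma)(\xi-1)}\,z_t .
\]
Equation (\ref{eqm:jobdist}) is then equivalent to $F(\lambda_t)=c_t$, where
\[
F(\lambda)\equiv\lambda+a\lambda^{\psi}.
\]
Since $\xi>1$, $\alpha+\gamma<1$, $\lambda_\theta>0$ and $z_t>0$, we have $c_t>0$. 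The coefficient $a$ may have either sign, depending on whether $\gamma(\xi-1)$ exceeds $\psi\big(1+(1-\alpha)(\xi-1)\big)$.

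I will treat the interior case $\psi\in(0,1)$ first. Here $F$ is continuous on $[0,\infty)$, with $F(0)=0<c_t$ and $F(\lambda)\to\infty$ as $\lambda\to\infty$, because the linear term dominates $\lambda^{\psi}$. By the intermediate value theorem, a root of $F(\lambda)=c_t$ exists. For uniqueness I split on the sign of $a$.

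\emph{Case $a\ge 0$.} $F$ is strictly increasing, so the root is unique.

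\emph{Case $a<0$.} The term $a\lambda^{\psi}$ is convex because $\lambda^{\psi}$ is concave, so $F$ is convex. A convex function with $F(0)<c_t$ meets the level $c_t$ only once. At the first crossing $\lambda^\ast$, $F$ must be increasing, in the sense of a positive right derivative, since it came from below. By convexity the derivative is nondecreasing, so $F$ stays strictly increasing for $\lambda>\lambda^\ast$ and cannot return to $c_t$. This also rules out a second root on $[0,\lambda^\ast)$.

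In either case the root satisfies $\lambda_t>0$, because $F(0)=0\neq c_t$. This also confirms that the conjectured density $f_t(h)=\lambda_t e^{-\lambda_t h}$ is integrable, which closes the conjecture made in Section \ref{sec:3.3}.

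The boundary cases are where I expect the real obstacle: the argument above can fail there without parameter restrictions, and the statement as written gives none.
\begin{itemize}
\item If $\psi=1$, the equation is linear, $\lambda(1+a)=c_t$. It has a unique nonnegative solution iff $1+a>0$, which I would verify or impose from the primitives.
\item If $\psi=0$, the equation gives $\lambda_t=c_t-a$ explicitly, which requires $c_t\ge a$.
\end{itemize}
In that case $a=\gamma(\xi-1)/\big(\gamma(1+(1-\alpha-\gamma)(\xi-1))\big)$, so $c_t\ge a$ amounts to a condition relating $\lambda_\theta$, $z_t$ and this constant. I would check whether it follows automatically; if it does not, I would state it as the standing restriction under which Lemma \ref{lem:jobdist} holds, mirroring the standard heterogeneous-firm requirement that the firm-size distribution has a finite mean.
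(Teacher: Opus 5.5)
Your proof is correct and is essentially the paper's: the paper works with $G(\lambda)=g(\lambda)+\lambda$ (your $F-c_t$), uses $G(0)<0$ and $G\to\infty$, and splits on the sign of the coefficient. It uses monotonicity in one case and strict convexity in the other, locating the minimizer where you use a first-crossing argument, which is equivalent. Your endpoint caveat is well founded, because the paper's proof silently uses $0<\psi<1$, and the restrictions do not hold automatically: at $\psi=1$ the coefficient collapses to $a=-1/(\gamma\lambda_x)$, so a root in $[0,\infty)$ requires $\gamma\lambda_x>1$, while at $\psi=0$ one needs $c_t\ge(\xi-1)/(1+(1-\alpha-\gamma)(\xi-1))$, and both conditions fail at the paper's calibrated values ($\gamma=0.6$, $\lambda_x\approx0.87$, $\xi=9$, $\alpha=0.3$, $\lambda_\theta\approx2.62$, $z_t\in\{0,z_h\}$).
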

\begin{proof}
    See Appendix \ref{app:jobdist}.
\end{proof}

\deferred[lem:jobdist]{\subsection{Proof of Lemma \ref{lem:jobdist}\label{app:jobdist}}
The LHS of equation (\ref{eqm:jobdist}) is decreasing $\lambda_t$ and ranges from $0$ to $-\infty$. We denote the RHS as $g(\lambda_t)$. Our objective is to prove that the equation $G(\lambda_t)\equiv g(\lambda_t)+\lambda_t=0$ has a unique solution $\lambda_t\in [0,\infty)$. 

First, let us consider the scenario where
\[
(\xi-1)\big(\gamma-\psi(1-\alpha)\big)-\psi>0,
\]
i.e. $g'(x)\geq0$.
It is immediate $G(0)<0$, $\lim_{x\to\infty}G(x)\to\infty$, and $G'(x)>0$ for $x\in [0,\infty)$. It is straightforward $G(x)\equiv g(x)+x=0$ has a unique solution $x\in [0,\infty)$.

Second, let us consider the case
\[
(\xi-1)\big(\gamma-\psi(1-\alpha)\big)-\psi<0.
\]
Now we have $G''(x)>0$ for $x\in [0,\infty)$. Again, we have $G(0)<0$, $\lim_{x\rightarrow\infty}G(x)\rightarrow\infty$. Since \(G\) is continuous on \([0,\infty)\), the conditions \(G(0)<0\) and 
\(\lim_{x\to\infty}G(x)>0\) imply, by the Intermediate Value Theorem, the existence 
of a point \(x^\ast>0\) such that \(G(x^\ast)=0\). Hence a solution exists.

Because \(G''(x)>0\) for all \(x\ge 0\), the function \(f\) is strictly convex on 
\([0,\infty)\). Strict convexity implies that there exists a unique point 
\(m\in\mathbb{R}\) at which \(G\) attains its global minimum, and that  
\(G'(m)=0\). We distinguish two cases according to the location of \(m\).

\smallskip
\noindent\emph{Case 1: \(m \le 0\).}
For any \(x>m\), strict convexity implies \(G'(x) > G'(m) = 0\). In particular,
\(G'(x) > 0\) for all \(x \ge 0\), and therefore \(G\) is strictly increasing on
\([0,\infty)\). Since \(G(0) < 0\) and \(G\) is strictly increasing, the equation 
\(G(x)=0\) has at most one solution in \([0,\infty)\). Combined with existence, this
solution is unique.

\smallskip
\noindent\emph{Case 2: \(m > 0\).}
Strict convexity implies that \(G\) is strictly decreasing on \([0,m]\) and strictly
increasing on \([m,\infty)\). Since \(m\) is the unique minimizer and 
\(G(0)<0\), we have \(G(m)\le G(0)<0\). Consequently,
\begin{equation}
    G(x) < 0 \qquad \text{for all } x \in [0,m],
\end{equation}
so no solution to \(G(x)=0\) can lie in this interval. Any solution must lie in 
\((m,\infty)\), where \(G\) is strictly increasing. Thus at most one such solution 
exists. Together with existence, the solution is unique.

In both cases, the equation \(G(x)=0\) has exactly one solution on 
\([0,\infty)\), which completes the proof.
}

The existence of this solution confirms our conjecture that an equilibrium with an exponential job offer distribution, $h \sim \text{Exp}(\lambda_t)$, indeed exists.  Our model falls into the generic class studied by \cite{eeckhout2018}, who establish the uniqueness of the equilibrium matching function and firm size distribution in this environment. Taken together with our result, their uniqueness theorem implies that the exponential distribution we conjecture and verify is the only distribution consistent with equilibrium in our setting.

Figure \ref{fig:jobdist} graphically depicts the two sides of equation (\ref{eqm:jobdist}) as functions of $\lambda_t$, represented by the two curves. Their intersection identifies a unique equilibrium value $\lambda_t > 0$. Combined with the parameter governing the distribution of worker skills, $\lambda_x$, this equilibrium allows us to pin down the assortative matching pattern between firms and workers described in equation (\ref{pam}).

\begin{figure}[htbp]
    \centering
    \includegraphics[width=0.8\textwidth]{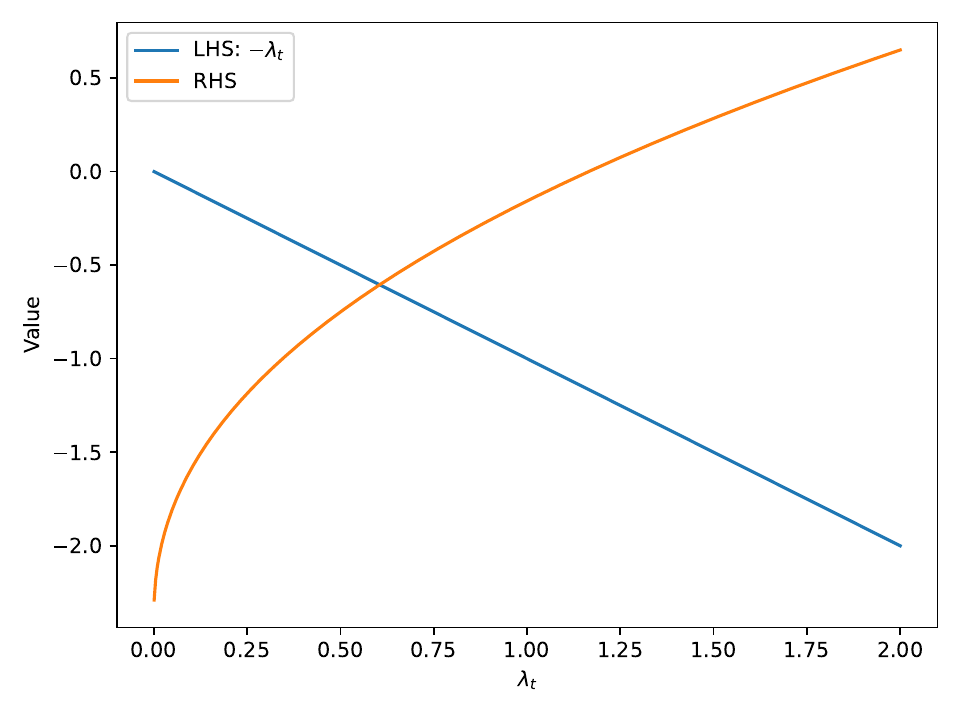}
    \caption{Equilibrium determination of $\lambda_t$.}
    \label{fig:jobdist}
\end{figure}

Note that we must still impose $\int_{0}^{\infty}f_t(h)\mathrm{d}h=1$, or equivalently, $\tilde{l}_t=\lambda_t$. Together with equation (\ref{eqm:jobdist}), this identity completes the labor-market clearing condition. Recall that
\begin{align*}
   \tilde{l}_t&\equiv \lambda_\theta\overline{l}_t\int_{-\infty}^{\infty}\exp\left(-\left(\frac{\xi-1}{\xi}\eta^Q\gamma+1\right) \epsilon_1\right)\mathrm{d} \Phi(\frac{\epsilon_1}{\sigma_1})\int_{-\infty}^{\infty}\exp\left(-\frac{\xi-1}{\xi}\eta^Q\alpha \epsilon_2\right)\mathrm{d} \Phi(\frac{\epsilon_2}{\sigma_2})\\
   &=\lambda_\theta\overline{l}_t\exp\!\Big(\frac{1}{2}\big[\big(\frac{\xi-1}{\xi}\eta^Q\gamma+1\big)^2\sigma_1^{2}+\big(\frac{\xi-1}{\xi}\eta^Q\alpha\big)^2\sigma_2^{2}\big]\Big).
\end{align*}
where $\overline{l}_t$, derived in Appendix \ref{app:firmsolution}, is given by
\[
\overline{l}_t=\left[\left(\frac{\gamma A_t \left(\frac{\bar{Q}_t}{Y_t}\right)^{-\frac{1}{\xi}}\frac{\xi-1}{\xi}}{w_t(0) )}\right) \left(\frac{\alpha \left(\frac{\bar{Q}_t}{Y_t}\right)^{-\frac{1}{\xi}}\frac{\xi-1}{\xi} \bar{Q}_{t}}{R_t}\right)^{\alpha} \right]^{\frac{1}{1-\gamma}}.
\]
$\overline{l}_t$ is a function of aggregate variables $w_t(0)$, $R_t$, and $Y_t$. We next move to characterize these objects.

\subsection{Aggregate Variables}
\paragraph{Prices and aggregate output} We can use market clearing conditions to characterize the remaining equilibrium variables. 
First, as noted above,
the labor market clearing condition $\int_{0}^{\infty}f_t(h)\mathrm{d}h=1$, implies that
\begin{align}
\lambda_\theta B_1\left[\left(\frac{\gamma A_t \left(\frac{\bar{Q}_t}{Y_t}\right)^{-\frac{1}{\xi}}\frac{\xi-1}{\xi}}{w_t(0) )}\right) \left(\frac{\alpha \left(\frac{\bar{Q}_t}{Y_t}\right)^{-\frac{1}{\xi}}\frac{\xi-1}{\xi} \bar{Q}_{t}}{R_t}\right)^{\alpha} \right]^{\frac{1}{1-\gamma}}=\lambda_t, \label{laborclear}
\end{align}
where
\[
B_1\equiv \exp\!\Big(\frac{1}{2}\big[\big(\frac{\xi-1}{\xi}\eta^Q\gamma+1\big)^2\sigma_1^{2}+\big(\frac{\xi-1}{\xi}\eta^Q\alpha\big)^2\sigma_2^{2}\big]\Big).
\]
Equation \eqref{laborclear} provides a relationship among $w_t(0)$, $R_t$, and $Y_t$. Importantly, $\lambda_t$ is separately characterized from the job distribution equation (\ref{eqm:jobdist}), which does not involve $w_t(0)$, $R_t$, and $Y_t$.

Next, capital market clearing requires aggregate capital supply to equal aggregate capital demand:
\begin{align*}
   K_t= \int_{-\infty}^{\infty}\int_{-\infty}^{\infty} \int_{0}^{\infty}k_t(\theta,\epsilon_1,\epsilon_2)*\lambda_\theta\exp(-\lambda_\theta \theta)\mathrm{d}\theta\mathrm{d} \Phi(\frac{\epsilon_1}{\sigma_1}) \mathrm{d} \Phi(\frac{\epsilon_2}{\sigma_2}),
\end{align*}
where the LHS, $K_t$, represents capital supply, an aggregate state variable predetermined at $t$, while the RHS aggregates firms' capital demand. Substituting equation \eqref{eq: kstar} and the expression of $\bar{k}_t$ yields
\begin{align}
   K_t&=\lambda_\theta\bar{k}_t\times\int_{0}^{\infty}\exp \left(\frac{\xi-1}{\xi} \eta^Q \eta_{\theta t}^Q\theta-\lambda_\theta \theta\right)\mathrm{d}\theta\nonumber \\
   &\times\int_{-\infty}^{\infty}\exp\left(-\frac{\xi-1}{\xi} \eta^Q \gamma\epsilon_1\right)\mathrm{d} \Phi(\frac{\epsilon_1}{\sigma_1})\times\int_{-\infty}^{\infty}\exp\left(-\frac{\xi-1}{\xi} \eta^Q (\alpha+\frac{\xi}{\xi-1}\frac{1}{\eta^Q})\epsilon_2\right)\mathrm{d} \Phi(\frac{\epsilon_2}{\sigma_2})  \nonumber\\
  &=\lambda_\theta B_2 \alpha\frac{\xi-1}{\xi}\frac{ \bar{Q}_t^{\frac{\xi-1}{\xi}}  Y_t^\frac{1}{\xi} }{R_t}\label{capitalclear}
\end{align}
where
\[
B_2\equiv\frac{
\exp\!\left[
\frac{1}{2}
\left(
\left(\frac{\xi-1}{\xi}\eta^{Q}\gamma\right)^{2}\sigma_1^{2}
+
\left(\frac{\xi-1}{\xi}\alpha\eta^{Q}+1\right)^{2}\sigma_2^{2}
\right)
\right]
}{
\displaystyle
\lambda_\theta
-
\frac{\xi-1}{\xi}\eta^{Q}\eta_{\theta t}^{Q}
}
\]
provided that $\lambda_\theta
-
\frac{\xi-1}{\xi}\eta^{Q}\eta_{\theta t}^{Q}>0$, otherwise capital demand becomes unbounded. Equation \eqref{capitalclear} therefore delivers a second equilibrium condition linking $w_t(0)$, $R_t$, and $Y_t$. Combining it with equation \eqref{laborclear}, we can represent prices $w_t(0)$ (and the entire wage schedule) and $R_t$ as functions of aggregate state variables and aggregate output $Y_t$.

Finally, the aggregate output $Y_t$ follows 
\begin{align}
     Y_t=\int_{-\infty}^{\infty}\int_{-\infty}^{\infty}\int_{0}^{\infty} \frac{\xi}{\xi-1}\chi_{t}^*(\theta,\epsilon_1,\epsilon_2) Q_{t}^*(\theta,\epsilon_1,\epsilon_2)\lambda_\theta\exp(-\lambda_\theta \theta)\mathrm{d}\theta\mathrm{d} \Phi(\frac{\epsilon_1}{\sigma_1}) \mathrm{d} \Phi(\frac{\epsilon_2}{\sigma_2}).\label{goodsclear}
\end{align}
The supply of each intermediate good is equal to its demand
\begin{align}
     Q_{t}^*(\theta,\epsilon_1,\epsilon_2)=(\frac{\xi}{\xi-1}\chi_{t}^*(\theta,\epsilon_1,\epsilon_2))^{-\xi}Y_t, 
\end{align}
which be substituted into \eqref{goodsclear} to get
\begin{align}
     1=\int_{-\infty}^{\infty}\int_{-\infty}^{\infty}\int_{0}^{\infty} \left[\frac{\xi}{\xi-1}\chi_{t}^*(\theta,\epsilon_1,\epsilon_2) \right]^{1-\xi}\lambda_\theta\exp(-\lambda_\theta \theta)\mathrm{d}\theta\mathrm{d} \Phi(\frac{\epsilon_1}{\sigma_1}) \mathrm{d} \Phi(\frac{\epsilon_2}{\sigma_2}).\label{goodsclear2}
\end{align}
Hence
\begin{align}
 1&=   \lambda_\theta\left( \frac{\xi}{\xi - 1} \right)^{1 - \xi}\overline{\chi}_t^{1 - \xi}\times\int_{0}^{\infty}\exp \left(\frac{\xi-1}{\xi} \eta^Q \eta_{\theta t}^Q\theta-\lambda_\theta \theta\right)\mathrm{d}\theta\nonumber \\
   &\times\int_{-\infty}^{\infty}\exp\left(-\frac{\xi-1}{\xi} \eta^Q \gamma\epsilon_1\right)\mathrm{d} \Phi(\frac{\epsilon_1}{\sigma_1})\times\int_{-\infty}^{\infty}\exp\left(-\frac{\xi-1}{\xi} \eta^Q \alpha\epsilon_2\right)\mathrm{d} \Phi(\frac{\epsilon_2}{\sigma_2})  \nonumber\\
  &=\lambda_\theta B_3 \bar{Q}_t^{\frac{\xi-1}{\xi}} Y_t^\frac{1-\xi}{\xi}, \label{goodsclear3}
\end{align}
where
\[
B_3\equiv\frac{
\exp\!\left[
\frac{1}{2}
\left(
\left(\frac{\xi-1}{\xi}\eta^{Q}\gamma\right)^{2}\sigma_1^{2}
+
\left(\frac{\xi-1}{\xi}\alpha\eta^{Q}\right)^{2}\sigma_2^{2}
\right)
\right]
}{
\displaystyle
\lambda_\theta
-
\frac{\xi-1}{\xi}\eta^{Q}\eta_{\theta t}^{Q}
}.
\]
Equation \eqref{goodsclear3} is therefore another equilibrium condition of $w_t(0)$, $R_t$,  and $Y_t$. Together with \eqref{laborclear} and \eqref{capitalclear}, the three equations determine the three unknown variables $\left\{w_t(0), R_t, Y_t\right\}$ uniquely for given aggregate state variables. The results are presented in Appendix \ref{app:unknownvariables}.

\deferred[unknown]{\subsection{Derivation of $\left\{w_t(0), R_t, Y_t\right\}$\label{app:unknownvariables}}
For notation ease, define
\[
\kappa \equiv \frac{\xi-1}{\xi}.
\]

We have
\[
\lambda_\theta B_3 \bar{Q}_t^{\frac{\xi-1}{\xi}} Y_t^{\frac{1-\xi}{\xi}} = 1.
\]
Using $\kappa=\frac{\xi-1}{\xi}$:
\[
\lambda_\theta B_3 \bar Q_t^{\kappa} Y_t^{-\kappa} = 1.
\]
Thus,
\[
\left(\frac{Y_t}{\bar Q_t}\right)^\kappa = \lambda_\theta B_3
\quad\Rightarrow\quad
Y_t = \bar Q_t (\lambda_\theta B_3)^{1/\kappa}
     = \bar Q_t(\lambda_\theta B_3)^{\frac{\xi}{\xi-1}}.
\]

Define
\[
M \equiv (\lambda_\theta B_3)^{1/\kappa}
    ,
\]
so
\[
Y_t = M\bar Q_t.
\]

Next, we have
\[
\lambda_\theta B_2 \alpha\kappa
\frac{\bar Q_t^\kappa Y_t^{1/\xi}}{R_t}
= K_t.
\]

Substitute $Y_t=M\bar Q_t$:
\[
\bar Q_t^{\kappa}Y_t^{1/\xi}
= \bar Q_t^\kappa(M\bar Q_t)^{1/\xi}
= M^{1/\xi}\bar Q_t^{\kappa+1/\xi}.
\]

Thus,
\[
R_t
= \frac{\lambda_\theta B_2 \alpha\kappa\, M^{1/\xi}\bar Q_t^{\kappa+1/\xi}}{K_t}.
\]

Since $\kappa+\frac{1}{\xi}=1$:
\[
R_t
= \frac{\lambda_\theta B_2 \alpha\kappa\, M^{1/\xi}\bar Q_t}{K_t}.
\]

Using $M^{1/\xi}=(\lambda_\theta B_3)^{\frac{1}{\xi-1}}$:
\[
R_t
= \frac{
\lambda_\theta B_2 \alpha\kappa\;
\bar Q_t
(\lambda_\theta B_3)^{\frac{1}{\xi-1}}
}{K_t}.
\]

The labor market clearing equation is
\[
\lambda_\theta B_1\left[
\left(
\frac{
\gamma A_t (\frac{\bar Q_t}{Y_t})^{-1/\xi}\kappa
}{w_t(0)}
\right)
\left(
\frac{
\alpha (\frac{\bar Q_t}{Y_t})^{-1/\xi}\kappa\bar Q_t
}{R_t}
\right)^\alpha
\right]^{\frac{1}{1-\gamma}}
= \lambda_t.
\]

Since
\[
\left(\frac{\bar Q_t}{Y_t}\right)^{-1/\xi}
= M^{1/\xi},
\]
we get
\[
\lambda_\theta B_1
\left[
\frac{\gamma A_t M^{1/\xi}\kappa}{w_t(0)}
\left(
\frac{\alpha M^{1/\xi}\kappa\bar Q_t}{R_t}
\right)^\alpha
\right]^{\frac{1}{1-\gamma}}
= \lambda_t.
\]

Now use
\[
\frac{\alpha M^{1/\xi}\kappa \bar Q_t}{R_t}
= \frac{K_t}{\lambda_\theta B_2}.
\]

Thus,
\[
\lambda_\theta B_1
\left[
\frac{\gamma A_t M^{1/\xi}\kappa}{w_t(0)}
\left(\frac{K_t}{\lambda_\theta B_2}\right)^\alpha
\right]^{\frac{1}{1-\gamma}}
= \lambda_t.
\]

Raise both sides to power $(1-\gamma)$:
\[
\frac{\gamma A_t M^{1/\xi}\kappa}{w_t(0)}
\left(\frac{K_t}{\lambda_\theta B_2}\right)^\alpha
=
\left(\frac{\lambda_t}{\lambda_\theta B_1}\right)^{1-\gamma}.
\]

Solve for $w_t(0)$:
\[
w_t(0)
= \gamma A_t M^{1/\xi}\kappa
  \left(\frac{K_t}{\lambda_\theta B_2}\right)^\alpha
  \left(\frac{\lambda_t}{\lambda_\theta B_1}\right)^{\gamma-1}.
\]

Since $M^{1/\xi}=(\lambda_\theta B_3)^{\frac{1}{\xi-1}}$:
\[
w_t(0)
=
A_t\gamma\kappa(\lambda_\theta B_3)^{\frac{1}{\xi-1}}
\left(\frac{\lambda_t}{\lambda_\theta B_1}\right)^{\gamma-1}
\left(\frac{K_t}{\lambda_\theta B_2}\right)^\alpha.
\]

We are now ready to solve $\bar Q_t$.
We have:
\[
\bar Q_t
= \left[
A_t
\left(\frac{\gamma}{w_t(0)}\right)^{\gamma}
\left(\frac{\alpha}{R_t}\right)^{\alpha}
\left(\kappa Y_t^{1/\xi}\right)^{\alpha+\gamma}
\right]^{\eta^{Q}}.
\]

Use $Y_t=M\bar Q_t$:
\[
Y_t^{1/\xi}
= (M\bar Q_t)^{1/\xi}
= M^{1/\xi}\bar Q_t^{1/\xi}.
\]

Thus
\[
(\kappa Y_t^{1/\xi})^{\alpha+\gamma}
= \kappa^{\alpha+\gamma}
  M^{\frac{\alpha+\gamma}{\xi}}
  \bar Q_t^{\frac{\alpha+\gamma}{\xi}}.
\]

Also
\[
\left(\frac{\alpha}{R_t}\right)^\alpha
=
\left(
\frac{K_t}{\lambda_\theta B_2 \kappa}
\right)^\alpha
\bar Q_t^{-\alpha}
M^{-\alpha/\xi}.
\]

Combine terms:
\[
\bar Q_t
=
\big[
A_t K_t^\alpha
\kappa^\gamma
\lambda_\theta^{-\alpha}B_2^{-\alpha}
M^{\gamma/\xi}
(\gamma/w_t(0))^\gamma
\bar Q_t^{-\alpha + \frac{\alpha+\gamma}{\xi}}
\big]^{\eta^Q},
\]
and define
\[
C_{\mathrm{in}}
\equiv
A_t K_t^\alpha
\kappa^\gamma
\lambda_\theta^{-\alpha}B_2^{-\alpha}
M^{\gamma/\xi}
\left(\frac{\gamma}{w_t(0)}\right)^\gamma.
\]

Then the fixed point is
\[
\bar Q_t
= \left[
C_{\mathrm{in}}
\bar Q_t^{-\alpha + \frac{\alpha+\gamma}{\xi}}
\right]^{\eta^{Q}}.
\]
Take logs to get:
\[
\log\bar Q_t
= \eta^{Q}\log C_{\mathrm{in}}
+ \eta^Q\!\left(-\alpha + \frac{\alpha+\gamma}{\xi}\right)\!\log\bar Q_t.
\]
Then rearrange terms:
\[
\left[
1 - \eta^{Q}\left(-\alpha + \frac{\alpha+\gamma}{\xi}\right)
\right]
\log\bar Q_t
= \eta^{Q}\log C_{\mathrm{in}}.
\]
Thus
\[
\log\bar Q_t
=
\frac{\eta^{Q}}{
1 - \eta^{Q}\left(-\alpha + \frac{\alpha+\gamma}{\xi}\right)
}
\log C_{\mathrm{in}}.
\]
Finally, exponentiate:
\[
\bar Q_t
= C_{\mathrm{in}}^{\frac{
\eta^{Q}
}{
1 - \eta^{Q}\left(-\alpha + \frac{\alpha+\gamma}{\xi}\right)
}}.
\]

The final solutions can be represented as follows

\[
\begin{aligned}
w_t(0) &=
A_t\gamma\kappa(\lambda_\theta B_3)^{\frac{1}{\xi-1}}
\left(\dfrac{\lambda_t}{\lambda_\theta B_1}\right)^{\gamma-1}
\left(\dfrac{K_t}{\lambda_\theta B_2}\right)^\alpha,\\
\bar Q_t &= C_{\mathrm{in}}^{\frac{
\eta^{Q}
}{
1 - \eta^{Q}\left(-\alpha + \frac{\alpha+\gamma}{\xi}\right)
}},\\[6pt]
Y_t &= M\bar Q_t,\\[6pt]
R_t &= 
\dfrac{\lambda_\theta B_2\alpha\kappa\;\bar Q_t\;(\lambda_\theta B_3)^{\frac{1}{\xi-1}}}{K_t}.\\[6pt]
\end{aligned}
\]
}

\paragraph{Consumption and investment}Finally, we aggregate our previous results to characterize the dynamics of consumption, and ultimately the evolution of the endogenous aggregate state variable $K_t$. At the aggregate level, the model behaves analogously to a stochastic neoclassical growth model. Aggregate consumption satisfies the representative household's Euler equation
\begin{align}
   \beta E_t\frac{C_t}{C_{t+1}}[R_{t+1}+(1-\delta)]=1, 
   \end{align}
together with the intertemporal budget constraint
\begin{align}
      C_t+K_{t+1}&= (1-\delta)K_t+Y^l_t+Y^k_t+Y_t^d.
\end{align}
where $Y^l_t$, $Y_t^k$, and $Y^d_t$ respectively denote aggregate labor income, capital income, and distributed profits. These objects are given by
\begin{align}
    Y_t^l&=\frac{\gamma\lambda_\theta B_1\bar{\chi}_t\bar{Q}_t}{\lambda_\theta
-
\frac{\xi-1}{\xi}\eta^{Q}\eta_{\theta t}^{Q}+z_t},\\
Y_t^k&=\alpha\lambda_\theta B_2 \bar{\chi}_t\bar{Q}_t,\\
Y_t^d&=(\frac{\xi}{\xi-1}-\gamma-\alpha)\lambda_\theta B_3\bar{\chi}_t\bar{Q}_t.
\end{align}
The terms $\bar{\chi}_t$ and $\bar{Q}_t$ are obtained from our previous analysis. Note that $Y^l_t$, $Y_t^k$, and $Y^d_t$ do not add up to the aggregate output $Y_t$, because wedges from market distortions generate resource losses that drive a gap between factor payments and aggregate output.

\section{Micro Implications}\label{sec:micro}
The framework developed above enables an analytical examination of cyclical fluctuations in micro-level distributions. In this section, we focus on two key margins: wages, which serve as a quantitative measure of employee performance, and productivity, which reflects employer effectiveness. Specifically, we study the dispersion of these variables and analyze how it evolves over the business cycle.

We study business cycles driven by fluctuations in $z_t$. Higher-type firms are more sensitive to these changes, since the shocks have a larger impact on firms with greater $\theta$. These changes induce shifts in the composition of labor demand, which in turn alter the sorting pattern between firms and workers.

To begin with, consider how a decline in $z_t$ affects the equilibrium job distribution parameter $\lambda_t$. Recall that $\lambda_t$ is determined by equation 
\begin{align}
    -\lambda_t&=g(\lambda_t,z_t)\nonumber\\
    &\equiv\frac{\big((\xi-1)\big(\gamma-\psi(1-\alpha)\big)-\psi\big)}
{\gamma\big(1+(1-\alpha-\gamma)(\xi-1)\big)}
\left(\frac{\lambda_t}{\lambda_x}\right)^{\psi}
-\frac{1+(\xi-1)(1-\alpha)}{1+(1-\alpha-\gamma)(\xi-1)}\,z_t
-\lambda_\theta.\label{eqm:jobdist2}
\end{align}
Figure \ref{fig:lambda_comparative} illustrates the comparative statics following a decline in $z_t$: the function $g(\lambda_t,z_t)$ shifts upward, leading to a lower new equilibrium $\lambda_t$. A formal proof is provided in Appendix \ref{app:lambda}. The main result is summarized as follows.
\begin{lem}
When $z_t$ decreases, the equilibrium job distribution parameter $\lambda_t$ decreases.\label{lem:lambda}
\end{lem}

\deferred[lem:lambda]{\subsection{Proof of Lemma \ref{lem:lambda}\label{app:lambda}}
Consider the equation
\[
G(x; z) \equiv b\, x^{\psi} + x - z = 0, \qquad x \in [0,\infty),
\]
with $0<\psi<1$ and $z>0$. Define
\[
H(x) \equiv b\, x^\psi + x,
\]
so that the solution $x(z)$ satisfies $H(x) = z$. $G(x;z)$ corresponds to equation (\ref{eqm:jobdist2}) if we denote
\[
b\equiv\frac{\big((\xi-1)\big(\gamma-\psi(1-\alpha)\big)-\psi\big)}
{\gamma\big(1+(1-\alpha-\gamma)(\xi-1)\big)}
\left(\frac{1}{\lambda_x}\right)^{\psi},
\]
\[
z\equiv\frac{1+(\xi-1)(1-\alpha)}{1+(1-\alpha-\gamma)(\xi-1)}\,z_t
+\lambda_\theta.
\]
Our objective is to prove the solution of $G(x;z)=0$, $x(z)$ increases with $z$.

\textbf{Case 1: $b\ge 0$.} 
Then
\[
H'(x) = 1 + b \psi x^{\psi-1} > 0 \quad \text{for all } x>0,
\]
so $H$ is strictly increasing. By the implicit function theorem,
\[
\frac{dx}{dz} = \frac{1}{1 + b \psi x^{\psi-1}} > 0.
\]
Hence $x(z)$ is strictly increasing in $z$, and a decrease in $z$ necessarily decreases $x(z)$.

\textbf{Case 2: $b<0$.} 
Then
\[
H'(x) = 1 + b \psi x^{\psi-1}, \qquad H''(x) = b \psi (\psi-1) x^{\psi-2} > 0 \text{ for } x>0,
\]
so $H'$ is strictly increasing. There exists a unique $x^*>0$ such that $H'(x^*) = 0$. Therefore, $H$ is strictly decreasing on $(0,x^*)$ and strictly increasing on $(x^*,\infty)$. Since $H(0) = 0$ and $z>0$, the solution $x(z)$ must lie on the increasing branch, i.e.\ $x(z) > x^*$, where $H'(x(z)) > 0$. Applying the implicit function theorem again gives
\[
\frac{dx}{dz} = \frac{1}{1 + b \psi x^{\psi-1}} > 0.
\]

\textbf{Conclusion.} 
For all real $b$, under $0<\psi<1$ and $z>0$, the equation $b\, x^\psi + x - z = 0$ has a unique solution $x(z)\in[0,\infty)$, and $x(z)$ is strictly increasing in $z$. In particular, a decrease in $z$ necessarily decreases $x(z)$.
\hfill $\blacksquare$

}

\begin{figure}[ht]
    \centering
    \includegraphics[width=0.8\textwidth]{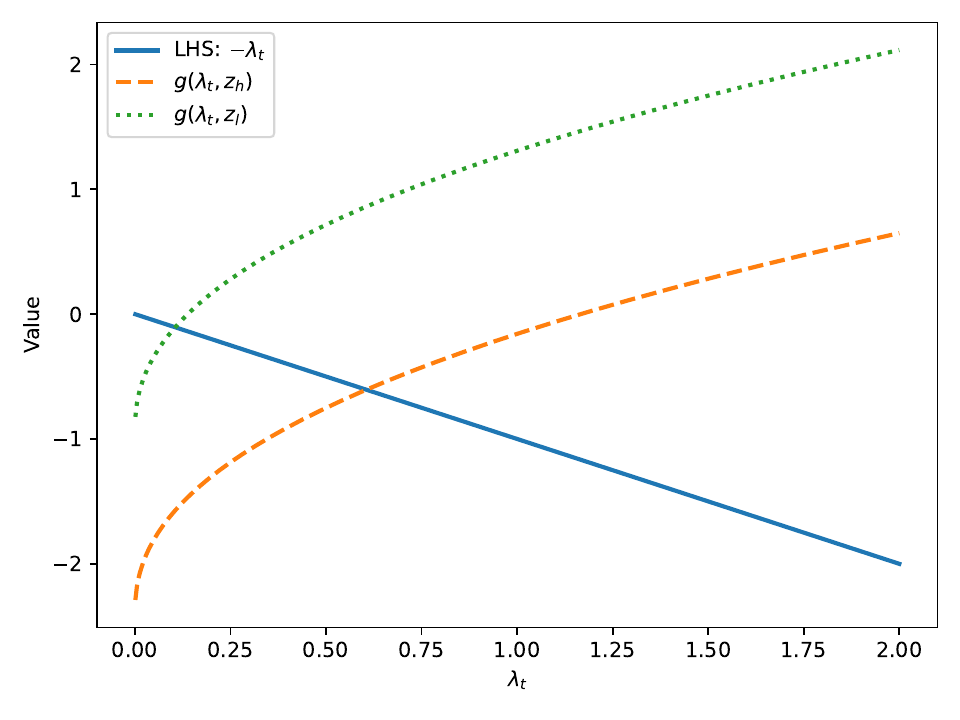}
    \caption{Equilibrium determination of $\lambda_t$ following an decrease of $z_t$}
     \caption*{\footnotesize\emph{Notes:} The figure plots a decrease of $z_t$ from $z_h$ to $z_l$. The RHS of equation (\ref{eqm:jobdist2}), $g(\lambda_t,z_t)$ shifts upward.}
    \label{fig:lambda_comparative}
\end{figure}

We can infer from the matching function (\ref{pam}) that during episodes of high $z_t$---which correspond to lower aggregate allocative efficiency---worker–firm sorting deteriorates, with workers being matched to inferior firms. This mechanism is reminiscent of the findings in \cite{crane2023},  who document that during downturns high-rank workers are more likely to work at low-rank firms, while low-rank workers are less likely to be employed by high-rank firms.

Note that $z_t$ is a reduced-form shock, designed to abstract from the micro-foundations of market frictions and their interactions with aggregate shocks. A decline in $z_t$ generates aggregate expansions in which better (or higher-type) firms expand relatively more. In reality, however, business cycles can arise from a variety of structural shocks, each potentially producing different patterns in micro-level distributions. In Section \ref{sec:4.3}, we use our theoretical framework to examine the effects of some of these structural drivers on labor market and productivity distributions.
\subsection{Wage Inequality} 
Recall that the wage schedule is given by
 \begin{align}
\ln(w_t(x))=\frac{\psi}{\gamma}(\frac{\lambda_x}{\lambda_t})^{1-\psi}x+\ln(w_t(0)).
\end{align}
Hence the dispersion of wages is
\begin{align}
\mathrm{Var}_t(\log w_{it})=(\frac{\psi}{\gamma})^2(\frac{\lambda_x}{\lambda_t})^{2-2\psi}\lambda_x^{-2}=(\frac{\psi}{\gamma})^2\lambda_x^{-2\psi}\lambda_t^{2\psi-2}.\label{eq:var_wage}
\end{align}
Since a decline in $z_t$ reduces $\lambda_t$, we obtain the following result.
\begin{prop}
   When $z_t$ decreases, wage inequality, $\mathrm{Var}_t(\log w_{it})$, increases. 
\end{prop}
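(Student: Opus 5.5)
The argument composes the closed-form variance in (\ref{eq:var_wage}) with the comparative statics of Lemma \ref{lem:lambda}.

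First, I will verify (\ref{eq:var_wage}). By (\ref{eq: wagex}), $\ln w_t(x)$ is affine in $x$ with slope $\frac{\psi}{\gamma}(\lambda_x/\lambda_t)^{1-\psi}$, and this slope does not depend on $x$. The intercept $\ln w_t(0)$ is common to all workers, so it does not affect the variance. Since $x_i\sim\mathrm{Exp}(\lambda_x)$ has variance $\lambda_x^{-2}$, we get
\[
\mathrm{Var}_t(\log w_{it})=\Big(\tfrac{\psi}{\gamma}\Big)^2\lambda_x^{-2\psi}\lambda_t^{2\psi-2}.
\]
The factor $(\psi/\gamma)^2\lambda_x^{-2\psi}$ depends only on time-invariant parameters. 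So the only channel through which $z_t$ enters wage dispersion is $\lambda_t$.

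Second, I will sign the dependence on $\lambda_t$. For $\psi\in(0,1)$ and $\lambda_t>0$, the derivative is
\[
\frac{\partial}{\partial\lambda_t}\mathrm{Var}_t(\log w_{it})=(2\psi-2)\Big(\tfrac{\psi}{\gamma}\Big)^2\lambda_x^{-2\psi}\lambda_t^{2\psi-3},
\]
which is strictly negative. Hence wage dispersion is strictly decreasing in $\lambda_t$.

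Third, I will invoke Lemma \ref{lem:lambda}. Its proof shows that the unique solution of (\ref{eqm:jobdist2}) is strictly increasing in $z_t$. This holds because the coefficient on $z_t$ in the definition of $z$ there, $\frac{1+(\xi-1)(1-\alpha)}{1+(1-\alpha-\gamma)(\xi-1)}$, is positive. A decrease in $z_t$ therefore strictly lowers $\lambda_t$. Composing this with the second step, $\mathrm{Var}_t(\log w_{it})$ strictly increases.

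The only delicate point is the boundary cases. If $\psi=0$, the variance is identically zero, so the claim holds only weakly. If $\psi=1$, the exponent $2\psi-2$ vanishes and the variance $(1/\gamma)^2\lambda_x^{-2}$ is independent of $\lambda_t$. I will therefore state the strict result for $\psi\in(0,1)$, which is exactly the range used in Lemma \ref{lem:lambda}'s proof. I will also check that $\lambda_t>0$ at the equilibrium; this follows from $G(0)<0$ in Lemma \ref{lem:jobdist}, so $\lambda_t^{2\psi-3}$ is well defined. I expect no real obstacle beyond making this parameter restriction explicit.
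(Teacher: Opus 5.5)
Your proposal is correct and follows the paper's argument: the paper likewise combines the closed form $\mathrm{Var}_t(\log w_{it})=(\psi/\gamma)^2\lambda_x^{-2\psi}\lambda_t^{2\psi-2}$ from (\ref{eq:var_wage}) with Lemma \ref{lem:lambda}. Your explicit sign check on the derivative and your restriction to $\psi\in(0,1)$ for the strict conclusion are sensible refinements the paper leaves implicit. The model allows $\psi\in[0,1]$, and at the endpoints wage dispersion is either zero or independent of $\lambda_t$.
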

Intuitively, a decline in $z_t$ enables high-type firms---those offering high-type jobs---to expand more aggressively. This shifts the composition of job offers toward the upper tail, increasing the relative mass of high-type offers in the labor market. Consequently, the wage schedule steepens as competition intensifies, even in the absence of any change in labor supply. This results in greater wage dispersion.

\subsection{Firm-level Productivity}
It is well established in macroeconomics that firms exhibit heterogeneous responses to aggregate shocks, particularly in terms of output and investment. What sets our framework apart is that this heterogeneity extends beyond output and investment to measured productivity, which also reacts differently across firms.

There are various ways to measure firms' productivity:
\begin{align}
    \log TFP_{jt}\coloneqq\log y_{jt}-\alpha\log k_{jt}-\gamma\log l_{jt}
\end{align}
where $y_{jt}$ can refer to either the output quantity or revenue of firm $j$, depending on whether we aim to measure quantity-based productivity (TFPQ), or revenue-based productivity (TFPR).

Let us first consider the physical productivity, i.e. TFPQ. Using (\ref{prod2}) and (\ref{pam}), we obtain
\begin{align}\label{eq: tfpq}
   \log TFPQ(\theta)=\left(\mu^{-1}_t(\theta)\right)^\psi\theta^{1-\psi}
   =(\frac{\lambda_t}{\lambda_x})^{\psi}\theta.
\end{align}
Thus, TFPQ reflects only the intrinsic productivity of the firm and does not depend on firm-level distortions $\epsilon_1$ or $\epsilon_2$, which capture external market frictions. Its dependence on the aggregate state $z_t$ is only indirect, operating through the effect on the job-type distribution parameter $\lambda_t$. Similar to the distribution of wages, the distribution of TFPQ across firms is also Pareto.  

The dispersion of $\log TFPQ_{jt}$ at time $t$ is therefore
$$\mathrm{Var}_t(\log TFPQ_{jt})=(\frac{\lambda_t}{\lambda_x})^{2\psi} \mathrm{Var}\left( \theta^{1-\psi}\right)=(\frac{\lambda_t}{\lambda_x})^{2\psi}\lambda_{\theta}^{-2}$$ 
We already know that a decline in $z_t$ reduces $\lambda_t$. This immediately yields the following result.
\begin{prop}\label{prop: tfpq}
   When $z_t$ decreases, the dispersion of firm-level $\log TFPQ$, $\mathrm{Var}_t(\log TFPQ_{jt})$, decreases. 
\end{prop}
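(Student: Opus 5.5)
The plan is to reduce Proposition \ref{prop: tfpq} to Lemma \ref{lem:lambda} through the closed-form variance of measured productivity. I will carry out three steps in order.

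\emph{Step 1: the closed form for $\log TFPQ$.} By the production function \eqref{prod2}, a firm of type $\theta$ matched with workers of type $x$ has $\log q(x,\theta)=x^\psi\theta^{1-\psi}$. In equilibrium the firm posts jobs $h=\theta$ and hires workers of type $x=\mu_t^{-1}(\theta)=\frac{\lambda_t}{\lambda_x}\theta$, from the matching function \eqref{pam}. Substituting gives \eqref{eq: tfpq}:
\[
\log TFPQ_{jt}=\Big(\frac{\lambda_t}{\lambda_x}\Big)^{\psi}\theta_j .
\]
This is a deterministic scalar multiple of $\theta_j$. It does not depend on $\epsilon_1$ or $\epsilon_2$, and $z_t$ enters only through $\lambda_t$.

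\emph{Step 2: the cross-sectional variance.} Firm types satisfy $\theta_j\sim\mathrm{Exp}(\lambda_\theta)$, and this distribution is time-invariant, so $\mathrm{Var}(\theta_j)=\lambda_\theta^{-2}$. Note that the variance is taken over firms, not over jobs, so it uses the primitive distribution of $\theta$ rather than the size-weighted job distribution $f_t(h)$. Therefore
\[
\mathrm{Var}_t(\log TFPQ_{jt})=\Big(\frac{\lambda_t}{\lambda_x}\Big)^{2\psi}\lambda_\theta^{-2}.
\]
For $\psi\in(0,1]$ this expression is strictly increasing in $\lambda_t$, because $\lambda_x$ and $\lambda_\theta$ are fixed positive parameters and $\lambda_t>0$. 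In the degenerate case $\psi=0$ the variance is constant, so I will state the strict result for $\psi>0$; this is also the range assumed in the proof of Lemma \ref{lem:lambda}.

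\emph{Step 3: comparative statics.} Lemma \ref{lem:lambda} (Appendix \ref{app:lambda}) shows that the unique solution $\lambda_t$ of \eqref{eqm:jobdist2} is strictly increasing in $z_t$. Hence a decrease in $z_t$ lowers $\lambda_t$, and by Step 2 it lowers $\mathrm{Var}_t(\log TFPQ_{jt})$. Composing the two monotone maps completes the proof.

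The only step needing care is confirming that Lemma \ref{lem:lambda} applies for all admissible parameters, in particular when the coefficient $b$ on $\lambda_t^\psi$ is negative. The lemma's proof already handles that case: the equilibrium solution lies on the increasing branch of $H$, so $dx/dz>0$. I will also recall that equilibrium requires $\lambda_\theta-\frac{\xi-1}{\xi}\eta^Q\eta^Q_{\theta t}>0$. This condition bears only on aggregation and does not affect the monotonicity argument. No further obstacle is expected.
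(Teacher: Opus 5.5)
Your proposal is correct and follows the paper's own argument. You substitute the matching function \eqref{pam} to get $\log TFPQ=(\lambda_t/\lambda_x)^{\psi}\theta$, compute the cross-firm variance $(\lambda_t/\lambda_x)^{2\psi}\lambda_\theta^{-2}$, and invoke Lemma \ref{lem:lambda}, which says $\lambda_t$ falls with $z_t$. Your use of $\mathrm{Var}(\theta)=\lambda_\theta^{-2}$ is the correct version of the paper's $\mathrm{Var}(\theta^{1-\psi})$, which is a typo. One small point: the lemma's proof covers $0<\psi<1$, not $(0,1]$ as you state.
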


Intuitively, when $z_t$ falls, high-type firms expand more aggressively than their low-type counterparts. This expansion shifts the composition of the labor market: high-type firms must increasingly hire lower-skill workers. As a result, their physical productivity is pulled down relative to that of low-type firms, compressing the overall dispersion of TFPQ.

Next, we turn to revenue-based productivity, TFPR. By definition, TFPR is the product of TFPQ and the price of the intermediate good. From (\ref{eq: chistar}), we obtain
\begin{align}
   \log TFPR_t\left(\theta,\epsilon_1,\epsilon_2\right)&=\log\frac{\xi}{\xi-1}-\frac{\eta^Q}{\xi} \left(\eta_{\theta t}^Q\theta-\gamma \epsilon_1-\alpha\epsilon_2\right)+ (\frac{\lambda_t}{\lambda_x})^{\psi}\theta\nonumber\\
   &=\log\frac{\xi}{\xi-1}+\left[(\frac{\lambda_t}{\lambda_x})^{\psi}-\frac{\eta^Q\eta^Q_{\theta t}}{\xi}\right]\theta+\frac{\eta^Q}{\xi} \left(\gamma \epsilon_1+\alpha\epsilon_2\right)
\end{align}
Unlike TFPQ, TFPR responds directly to firm-level distortions, as these distortions shift marginal costs and thus affect pricing decisions. Through its price component, TFPR therefore reflects not only "pure" production efficiency but also the impact of distortions and demand conditions. Importantly, higher TFPR may actually be associated with greater distortions, since distortions can raise marginal costs and thus inflate prices. As with firm sizes, TFPR also displays a Pareto tail.  

The dispersion of $\log\text{TFPR}_{jt}$ at time $t$ is given by
\begin{equation}
\mathrm{Var}_t(\log TFPR_{jt})=\left[(\frac{\lambda_t}{\lambda_x})^{\psi}-\frac{\eta^Q\eta^Q_{\theta t}}{\xi}\right]^2\lambda_{\theta}^{-2}+(\frac{\eta^Q}{\xi})^2\left(\gamma^2\sigma_1^2+\alpha^2\sigma_2^2\right)\label{var_tfpr}
\end{equation}
We can now state the main result.
\begin{prop}
\[
(\frac{\lambda_t}{\lambda_x})^{\psi}-\frac{\eta^Q\eta^Q_{\theta t}}{\xi}>0.
\]
 Moreover,  when $z_t$ decreases, the dispersion of firm-level revenue productivity, $\mathrm{Var}_t(\log TFPR_{jt})$, decreases. \label{prop:tfpr}
\end{prop}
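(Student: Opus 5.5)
The plan is to expand the coefficient on $\theta$ in $\log TFPR$ explicitly, show that each of its pieces is positive, and then read the comparative statics off \eqref{var_tfpr}.

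\textbf{Step 1: rewrite the coefficient.} Write $a_t\equiv(\lambda_t/\lambda_x)^{\psi}>0$ and $c_t\equiv a_t-\eta^Q\eta^Q_{\theta t}/\xi$. Substituting the definition $\eta^Q_{\theta t}=-\gamma z_t+(1-\psi)a_t$ from Lemma \ref{lem: firmsolution} gives
\[
c_t=a_t\Big[1-(1-\psi)\frac{\eta^Q}{\xi}\Big]+\frac{\eta^Q}{\xi}\gamma z_t .
\]

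\textbf{Step 2: sign the pieces.} The only genuine point to check is the bracket. By definition,
\[
\frac{\eta^Q}{\xi}=\frac{1}{1+(1-\alpha-\gamma)(\xi-1)}.
\]
Since $\xi>1$ and $\alpha+\gamma<1$, this ratio lies in $(0,1)$. Together with $1-\psi\in[0,1]$, this gives $1-(1-\psi)\eta^Q/\xi>0$. Combined with $a_t>0$, $\gamma>0$ and the standing assumption $z_t>0$, both summands of $c_t$ are nonnegative. The second summand is strictly positive, so $c_t>0$, which is the first claim of the proposition.

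\textbf{Step 3: comparative statics.} In \eqref{var_tfpr}, the second term $(\eta^Q/\xi)^2(\gamma^2\sigma_1^2+\alpha^2\sigma_2^2)$ does not depend on $z_t$. The first term is $c_t^2\lambda_\theta^{-2}$. Because $c_t>0$, it suffices to show that $c_t$ falls when $z_t$ falls.

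\emph{The $z_t$-term.} The summand $\frac{\eta^Q}{\xi}\gamma z_t$ is strictly increasing in $z_t$, so it strictly decreases.

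\emph{The $a_t$-term.} By Lemma \ref{lem:lambda}, a fall in $z_t$ lowers $\lambda_t$. Hence $a_t=(\lambda_t/\lambda_x)^{\psi}$ weakly decreases, since $\psi\ge0$; when $\psi=0$ it is constant. Its coefficient $1-(1-\psi)\eta^Q/\xi$ is positive by Step 2, so this summand also weakly decreases.

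Therefore $c_t$ strictly decreases and remains positive. It follows that $c_t^2$, and hence $\mathrm{Var}_t(\log TFPR_{jt})$, strictly decreases.

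\textbf{Main obstacle.} I expect the only delicate step to be Step 2's bound $\eta^Q(1-\psi)/\xi<1$, which rests on $\xi>1$ and $\alpha+\gamma<1$. Once that holds, the direct effect of $z_t$ and the indirect sorting effect through $\lambda_t$ move $c_t$ in the same direction, so no cases need to be split on the sign of the coefficient $b$ used in the proof of Lemma \ref{lem:lambda}.
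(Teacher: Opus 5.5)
Your proposal is correct and follows the paper's proof essentially line for line. The paper also rewrites the coefficient as $\bigl[1-\frac{1-\psi}{1+(1-\alpha-\gamma)(\xi-1)}\bigr](\lambda_t/\lambda_x)^{\psi}+\frac{\gamma}{1+(1-\alpha-\gamma)(\xi-1)}z_t$, signs both terms, and uses Lemma \ref{lem:lambda} to conclude that each term falls with $z_t$. Your version adds two points the paper leaves implicit: why $\eta^Q/\xi\in(0,1)$, and why the coefficient must be positive for a smaller coefficient to give a smaller square in \eqref{var_tfpr}.
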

\begin{proof}
 We have
\begin{align*}
(\frac{\lambda_t}{\lambda_x})^{\psi}-\frac{\eta^Q\eta^Q_{\theta t}}{\xi}&=(\frac{\lambda_t}{\lambda_x})^{\psi}+\frac{1}{1+(1-\alpha-\gamma)(\xi-1)}\big[\gamma z_t-(1-\psi)(\frac{\lambda_t}{\lambda_x})^{\psi}\big]\\
&=\big[1-\frac{1-\psi}{1+(1-\alpha-\gamma)(\xi-1)}\big](\frac{\lambda_t}{\lambda_x})^{\psi}+\frac{\gamma}{1+(1-\alpha-\gamma)(\xi-1)}z_t.
\end{align*}
Both terms on the RHS are strictly positive. Moreover, a reduction in $z_t$ reduces each term, and therefore reduces the expression as a whole. Q.E.D.  
\end{proof}

The change in TFPR dispersion reflects shifts in both the dispersion of TFPQ and the dispersion of prices (or, equivalently, marginal costs). 
Mathematically, the impact of $z_t$ operates through its effect on the term $$(\frac{\lambda_t}{\lambda_x})^{\psi}-\frac{\eta^Q\eta^Q_{\theta t}}{\xi},$$ The first component, $(\lambda_t/\lambda_x)^{\psi}$, always decreases when $z_t$ falls, as shown earlier. The second component,
\begin{equation}
-\frac{\eta^Q\eta^Q_{\theta t}}{\xi}=\frac{1}{1+(1-\alpha-\gamma)(\xi-1)}\big[\gamma z_t-(1-\psi)(\frac{\lambda_t}{\lambda_x})^{\psi}\big],\label{eq:mc}
\end{equation}
embeds two opposing forces. A lower $z_t$ reduces the dispersion of firm-level distortions, compressing the dispersion of marginal costs and therefore prices---this corresponds to the first term on the RHS.  At the same time, the change in the matching pattern increases marginal costs through $-(\lambda_t/\lambda_x)^{\psi}$, a force increasing the dispersion of marginal costs, as indicated by the second term on the RHS. However, as shown in the previous proof, the forces is outweighed by the decline in $(\lambda_t/\lambda_x)^{\psi}$, i.e. the direct effect operating through TFPQ. Consequently, the overall term unambiguously decreases when $z_t$ falls, implying that the dispersion of TFPR decreases as well.

\paragraph{Robustness}

As the response of the productivity distribution to changes in the matching functions is the main result of this article, it is natural to wonder about the extent to which it is driven by our functional form assumptions. In our specification improvements in firms' matches increase log TFPQ multiplicatively. While this very particular form of dependence is unlikely to hold in more general models, it is also clearly not necessary for the larger conclusion to hold. Indeed, what suffices is that improvements in firms' matches improve the distribution of TFPQ in a strong enough sense, specifically in the hazard rate order (HRO) sense \citep[see Section 1.B in][]{Shaked2007}, and that the equilibrium distribution of log TFPQ has a decreasing hazard rate (DHR). These conditions ensure, by Theorem 3.B.20  in \cite{Shaked2007}, that the distribution of log TFPQ becomes more unequal in the sense of the dispersive order \citep[see Section 3.B in][]{Shaked2007}, which in turn implies an increase in the variance of log TFPQ. Both of these conditions are satisfied in our specification, because the exponential distribution has a constant hazard rate and a proportional improvement in log TFPQ implies an improvement in the hazard rate order sense for any distribution with a decreasing hazard rate.  

It remains an open question what general conditions would ensure that the equilibrium log TFPQ distribution is log-convex and that aggregate shocks improve said distribution in HRO sense.
\subsection{Numerical Exercise}\label{sec:numerical}

Next we move to a numerical illustration. Consider an environment in which aggregate fluctuations are driven exclusively by the reduced-form market efficiency shock $z_t$. The model is simulated by feeding in a stochastic process for $z_t$, which follows a two-state Markov chain taking values $\{0, z_h\}$, corresponding to boom and crisis states. The transition probabilities $(p_l, p_h)$ govern the persistence of each state, representing the probability of remaining in a boom or a crisis, respectively.

\paragraph{Calibration}

We calibrate the model at an annual frequency. Table \ref{table:1} reports the parameters fixed a priori, including standard macroeconomic values such as capital intensity $\alpha = 0.3$, labor intensity $\gamma = 0.6$, depreciation $\delta = 0.10$, discount factor $\beta = 0.96$, and CES parameter $\xi = 9$. The Markov transition probabilities $(p_l, p_h)$ are taken from \cite{khan2013} to match the durations of booms and crises. For simplicity, we set $\sigma_2 = 0$. As established earlier in this section, $\sigma_2$ is isomorphic to $\sigma_1$ in the computation for the second moments of our interests.

\begin{table} 
\centering %
\begin{tabular}{lcc } 
\hline \hline 
 Parameter& Description&Value \\
\hline 
 $\alpha$ & Capital intensity  & 0.3   \tabularnewline
$\gamma$ & Labor intensity  & 0.6   \tabularnewline 
$\delta$ &Depreciation rate  &  0.10  \tabularnewline
$\beta$ &Discount factor& 0.96   \tabularnewline
$\xi$ & CES elasticity of substitution  &  9 \tabularnewline
$p_l$ & Prob of staying in a boom  &  0.977 \tabularnewline
$p_h$ & Prob of staying in a recession    &0.688 \tabularnewline
$\sigma_2$ & Std of $\epsilon_2$  & 0 \tabularnewline
\hline \hline 
\end{tabular}\caption{Fixed Parameters}
\label{table:1} 
\end{table}

The remaining parameters are calibrated by minimizing the sum of squared deviations between model-implied moments and their empirical counterparts. The model is simulated for $10^4$ periods, and all moments are computed using the final 9,900 observations to eliminate the influence of initial conditions.

We target three classes of moments. First, we discipline the cross-sectional distributions of workers and firms by matching overall wage inequality---measured as the standard deviation of log wages---as well as the concentration of firm revenues, captured by the revenue shares of the top 10 percent of firms and of firms between the 50th and 90th percentiles. Second, we target the aggregate labor share, which in our framework reflects distortions arising from labor market wedges, as discussed in Section \ref{sec:focs}. Finally, we target the volatility of aggregate productivity, measured by the standard deviation of aggregate TFP. Wage inequality moments are taken from \cite{song2019}, firm revenue distribution moments from \cite{kwon2024}, and aggregate labor share and TFP from FRED. All empirical moments are computed over the 1978-2013 sample. Table~\ref{table:2} reports the calibration results: Panel A presents the calibrated parameter values, while Panel B compares the targeted moments with their model-implied counterparts.

\begin{table}
\centering %
\footnotesize
\resizebox{\textwidth}{!}{%
\begin{tabular}{lcc}  
\multicolumn{3}{c}{\textbf{Panel A: Calibrated Parameters}} \tabularnewline
\hline \hline 
 Parameter& Description&Value \\
\hline 
$\psi$ &  intensity of worker type $x$ & 0.4022   \tabularnewline 
$z_h$ & $z$ in a recession  &  0.3984 \tabularnewline
$\lambda_\theta$ & distribution of $\theta$ (type of firms)  & 2.6160 \tabularnewline
$\lambda_x$ & distribution of $x$ (type of workers)  & 0.8681 \tabularnewline
$\sigma_1$ & Std of $\epsilon_1$  & 0.2293 \tabularnewline
\hline \hline  \\
\multicolumn{3}{c}{\textbf{Panel B: Targets and Model Fit}} \tabularnewline
\hline \hline 
 Moment & Data & Model \\
\hline 
Labor share & 0.6097  & 0.6102   \tabularnewline
Wage inequality &  0.7666 &  0.7666  \tabularnewline
Share of revenue of top 10\% firms & 0.9074 & 0.8906   \tabularnewline
Share of revenue of top 50\%-top 10\% firms & 0.0842 &  0.0840 \tabularnewline
Std of TFP  & 0.0090 & 0.0090 \tabularnewline
\hline \hline 
\end{tabular}}\caption{Calibration and Model Fit }
  \caption*{\footnotesize\emph{Notes:} Labor share data and TFP data are retrieved from FRED.  
    Wage inequality is from \cite{song2019}.  Shares of revenue are from \cite{kwon2024}. All the moments are calculated over year 1978-2013.}
\label{table:2}
\end{table}

\paragraph{Impulse responses}
Figure \ref{fig:irf} reports the average impulse responses to an increase in the market-efficiency shock, $z_t = z_h$, which corresponds to a deterioration in allocative efficiency and hence a recessionary episode. The top-left panel shows that aggregate TFP declines on impact, reflecting the worsening allocation of labor across heterogeneous firms. Aggregate output falls by an even larger magnitude, driven both by lower TFP and by a gradual contraction in the capital stock.

\begin{figure}[H]
    \centering
    \includegraphics[
        width=0.85\textwidth,
        trim=1.5cm 8.cm 1.5cm 7cm,
        clip
    ]{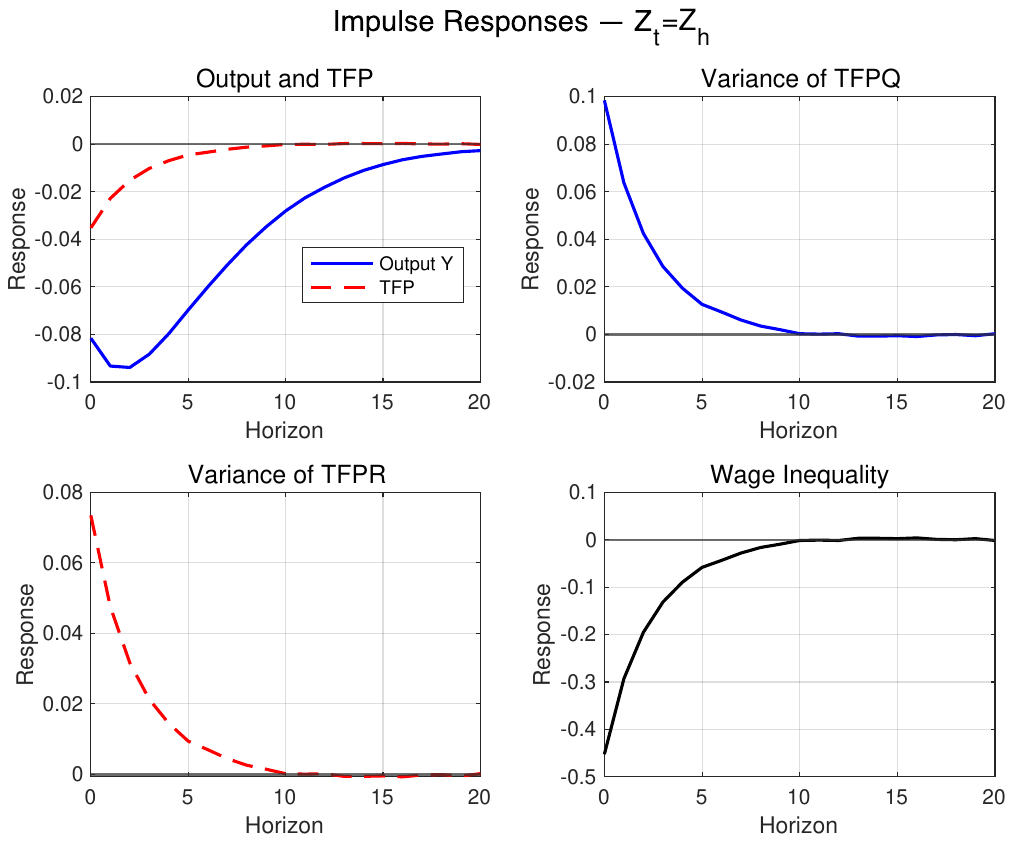}
    \caption{Impulse responses to a market-efficiency shock}
    \label{fig:irf}
\end{figure}

The remaining panels illustrate how the shock reshapes cross-sectional distributions. The variance of firm-level quantity productivity (log TFPQ) rises sharply on impact and remains elevated for several periods, indicating an increase in productivity dispersion. Intuitively, as market efficiency deteriorates, high-type firms contract disproportionately and concentrate employment on their most productive workers, amplifying differences in measured productivity across firms. A similar dynamic is observed for the dispersion of firm-level revenue productivity (log TFPR), which also increases before gradually converging back to its steady state.

In contrast, wage inequality declines persistently following the shock. As allocative efficiency worsens, high-productivity firms shrink and high-type positions become scarcer, flattening the wage schedule across worker types and compressing earnings dispersion. Taken together, the impulse responses highlight the central mechanism of the model: market-efficiency shocks generate opposing cyclical movements in productivity dispersion and wage inequality---productivity dispersion rises while wage dispersion falls during downturns---mirroring the key empirical regularities.

The magnitudes of the impulse responses are economically significant. The responses of GDP and aggregate TFP are reported in log differences. Following a simulated crisis---that is, a transition to the low-efficiency state---GDP contracts by more than 9 percent on impact, reflecting both a decline in allocative efficiency and a gradual reduction in the capital stock. The remaining panels report level deviations from their unconditional means. When $z_t=0$, the variances of log firm-level productivity (TFPQ) and log firm-level revenue (TFPR) are $0.1203$ and $0.0546$, respectively; when $z_t=z_h$, these variances increase to $0.2254$ and $0.1330$, indicating a substantial widening of productivity and revenue dispersion during crises. In contrast, wage inequality declines sharply, falling from $0.7901$ in booms to $0.3132$ in recessions.

While the quantitative responses are large relative to available empirical estimates,\footnote{See, for example, \cite{kehrig2015} and \cite{cunningham2023}.} they reflect the stylized nature of the experiment and serve to highlight the strength of the model's underlying mechanisms. 
When aggregate TFP fluctuations are driven jointly by the reduced-form market efficiency shock $z_t$ and the aggregate productivity shock $A_t$, the implied volatility of $z_t$ is smaller than in our numerical experiment. As implied by the analysis in Section \ref{sec:4.3},  the resulting fluctuations of other endogenous variables are closer in magnitude to those observed in the data.

\section{Structural Drivers of Business Cycles}\label{sec:4.3}
For simplicity and clarity in both exposition and notation, we have focused exclusively on aggregate fluctuations driven by time-varying $z_t$. As noted earlier, we do not model market frictions explicitly, so $z_t$ should be interpreted as a reduced-form shock capturing the combined influence of multiple structural shocks. The benefit of this approach is that it delivers closed-form implications for the distributions we care about.

That said, no two booms or recessions are necessarily alike, since they may arise from different underlying forces. Macroeconomists have proposed a range of structural shocks to account for business-cycle dynamics. We now examine the implications of several such shocks within our framework.
\paragraph{Productivity shock}
One of the most common shocks in business cycle models is the aggregate productivity shock. 
Recall that the production function is
\begin{align*}
Q_t(l, k, x, \theta)=A_t q(x, \theta) k^{\alpha}  l^\gamma,\label{production2}
\end{align*}
where $A_t$ denotes an aggregate productivity shock. Our earlier analysis shows that $\lambda_t$ is independent of $A_t$, and that $A_t$ does not affect the sorting equation (\ref{pam}). Consequently, in our framework, productivity shocks do not influence the dispersion of wage or productivity.

It is important to emphasize, however, that we have treated the micro-foundations of market distortions as a black box. In a model with explicit micro-founded frictions, firms of different types may respond heterogeneously to productivity shocks. Therefore, our result should not be interpreted as a general claim that productivity shocks never alter sorting patterns or dispersion moments. Rather, within the current framework, productivity shocks leave sorting and dispersion unchanged precisely because they do not generate any relative compositional shifts across firm types in the labor market. Without such shifts, the underlying worker-firm matching and the associated dispersion of wages or productivity remain unaffected.

\paragraph{Second-moment shocks}

Thus far, we have not imposed any specific structure or assumptions on the idiosyncratic process of firm type $\theta$. We have only assumed that $\theta\sim \mathrm{Exp}\left(\lambda_\theta\right)$. Consider the following simple idiosyncratic process
\begin{align*}
\theta_{j,t+1} =
\begin{cases}
\rho\,\theta_{j,t}, & \text{with probability } \rho,\\
\rho\,\theta_{j,t} + \varepsilon_{t+1}, & \text{with probability } 1-\rho,
\end{cases}
\end{align*}
where
$\varepsilon_{t+1} \sim \mathrm{Exp}(\lambda_\theta)$ is i.i.d  across firms and time. The process ensures  that $\theta$ follows an exponential distribution with fixed $\lambda_\theta$.

We can extend this framework to allow for a time-varying distribution of $\theta$ by considering the following process:
\begin{align*}
  \theta_{j,t+1} =
\begin{cases}
\rho\,\theta_{j,t}, & \text{with probability } p_{t+1},\\
\rho\,\theta_{j,t} + \varepsilon_{t+1}, & \text{with probability } 1-p_{t+1},
\end{cases}
\end{align*}
where
\begin{align*}
\varepsilon_{t+1} \sim \mathrm{Exp}(\lambda_{\theta,t+1}),
\qquad
p_{t+1} = \frac{\rho\,\lambda_{\theta,t+1}}{\lambda_{\theta,t}}.    
\end{align*}
For all $t$, $\lambda_{\theta t}\in\left\{\lambda^L_\theta,\lambda^H_\theta\right\}$ and follows a Markov process, with $\lambda_\theta^L<\lambda_\theta^H<\frac{\lambda_\theta^L}{\rho}$. This process guarantees that $\theta\sim \mathrm{Exp}\left(\lambda_{\theta t}\right)$ with a time-varying $\lambda_{\theta t}$ evolving according to a Markov process.

Note that the shock to $\lambda_\theta$ is not purely a second-moment shock: a decrease in $\lambda_\theta$ increases both the variance and the mean of $\theta$.

How does a fall in $\lambda_{\theta t}$ affect the distribution of jobs? Using equation (\ref{eqm:jobdist2}), we see that, similar to a decline in $z_t$, the decrease in $\lambda_{\theta t}$ reduces $\lambda_t$. From equation (\ref{eq:var_wage}), it then follows immediately that wage dispersion increases.

However, unlike the reduced-form shock in $z_t$, a decrease in $\lambda_\theta$ also increases the dispersions of firm-level TFPQ and TFPR. As shown in Appendix \ref{app:shocklambda}, this type of shock shifts wage inequality and productivity dispersion in the same direction.

\deferred[app:shocklambda]{\subsection{The effects of shocks to $\lambda_\theta$\label{app:shocklambda}}

Consider the equilibrium condition
\begin{equation}\label{eq:lambda_eq}
    -\lambda_t
    = b\!\left(\frac{\lambda_t}{\lambda_x}\right)^{\psi}
      - d z_t
      - \lambda_\theta,
\end{equation}
where all variables are strictly positive, $0<\psi<1$, $d>0$, and $z_t>0$. This equation implicitly defines $\lambda_t=\lambda_t(\lambda_\theta)$.

\bigskip
\noindent\textbf{Step 1. Comparative statics for $\left(\frac{\lambda_t}{\lambda_x}\right)^{2\psi}\lambda_\theta^{-2}$.}

Define
\[
A \equiv \left(\frac{\lambda_t}{\lambda_x}\right)^{2\psi}, \qquad B \equiv A \lambda_\theta^{-2}.
\]

By the implicit function theorem, differentiating \eqref{eq:lambda_eq} with respect to $\lambda_\theta$ gives
\[
\frac{\partial \lambda_t}{\partial \lambda_\theta}
= \frac{1}{1 + b \psi \lambda_x^{-\psi}\lambda_t^{\psi-1}} > 0.
\]

Taking logarithms of $B$ and differentiating:
\[
\frac{1}{B}\frac{\partial B}{\partial \lambda_\theta}
= 2\psi \frac{1}{\lambda_t} \frac{\partial \lambda_t}{\partial \lambda_\theta} - \frac{2}{\lambda_\theta}
= 2\left(
\frac{\psi}{\lambda_t(1 + b \psi \lambda_x^{-\psi}\lambda_t^{\psi-1})}
- \frac{1}{\lambda_\theta}
\right).
\]

From \eqref{eq:lambda_eq}, we can write
\[
\lambda_\theta = \lambda_t\big(1 + b \lambda_x^{-\psi} \lambda_t^{\psi-1}\big) - d z_t,
\]
so that
\[
\frac{\psi}{\lambda_t(1 + b \psi \lambda_x^{-\psi}\lambda_t^{\psi-1})} < \frac{1}{\lambda_\theta}.
\]

Hence
\[
\frac{\partial B}{\partial \lambda_\theta} < 0.
\]
This shows that $B=(\lambda_t/\lambda_x)^{2\psi}\lambda_\theta^{-2}$ increases when $\lambda_\theta$ decreases.

\bigskip
\noindent\textbf{Step 2. Comparative statics for $\big(e (\lambda_t/\lambda_x)^{2\psi} + m z_t\big)^2 \lambda_\theta^{-2}$.}

Here we have $e>0$ and $m>0$.
Define
\[
A \equiv \left(\frac{\lambda_t}{\lambda_x}\right)^{2\psi}, \qquad
S \equiv \big(e A + m z_t\big)^2 \lambda_\theta^{-2}.
\]

Take logarithms and differentiate with respect to $\lambda_\theta$:
\[
\frac{1}{S}\frac{\partial S}{\partial \lambda_\theta}
= 2 \frac{e\, \partial_{\lambda_\theta} A}{e A + m z_t} - \frac{2}{\lambda_\theta}.
\]

Since $A = (\lambda_t/\lambda_x)^{2\psi}$ and $\lambda_x$ is constant, the chain rule gives
\[
\partial_{\lambda_\theta} A
= 2\psi (\lambda_t)^{2\psi-1} \frac{1}{\lambda_x^{2\psi}} \frac{\partial \lambda_t}{\partial \lambda_\theta}
= 2 \psi \frac{A}{\lambda_t} \frac{\partial \lambda_t}{\partial \lambda_\theta}.
\]
Here we used the identity $A = (\lambda_t)^{2\psi}/(\lambda_x)^{2\psi}$.

\medskip
\noindent
From Step 1, we already have
\[
\frac{\partial \lambda_t}{\partial \lambda_\theta}
= \frac{1}{1 + b \psi \lambda_x^{-\psi}\lambda_t^{\psi-1}} > 0.
\]

\medskip
\noindent
Substituting into the derivative of $S$:
\[
\frac{e\, \partial_{\lambda_\theta} A}{e A + m z_t}
= \frac{2\psi e A}{(e A + m z_t) \lambda_t (1 + b \psi \lambda_x^{-\psi} \lambda_t^{\psi-1})} \le \frac{2\psi}{\lambda_t (1 + b \psi \lambda_x^{-\psi} \lambda_t^{\psi-1})}.
\]

Finally, using the inequality established in Step 1, we conclude
\[
\frac{\partial S}{\partial \lambda_\theta} < 0,
\]
so $S$ increases when $\lambda_\theta$ decreases.

\bigskip
\noindent\textbf{Conclusion.}  
Both quantities
\[
B = \left(\frac{\lambda_t}{\lambda_x}\right)^{2\psi} \lambda_\theta^{-2}, \quad
S = \big(e (\lambda_t/\lambda_x)^{2\psi} + m z_t\big)^2 \lambda_\theta^{-2}
\]
are strictly decreasing in $\lambda_\theta$. Section \ref{sec:4.3} shows that the TFPQ dispersion can be expressed as $B$, and the TFPR dispersion can be expressed as 
\[
\mathrm{Var}_t(\log TFPR_{jt})=S+(\frac{\eta^Q}{\xi})^2\left(\gamma^2\sigma_1^2+\alpha^2\sigma_2^2\right)
\]
if
\begin{align*}
e=1-\frac{1-\psi}{1+(1-\alpha-\gamma)(\xi-1)},
\qquad
m=\frac{\gamma}{1+(1-\alpha-\gamma)(\xi-1)}.     
\end{align*}
Therefore, a decrease in $\lambda_\theta$ increases both the dispersions of TFPQ and TFPR.

}

Another way to introduce second-moment shocks is through time-varying variances of  $\epsilon_{1}$ and $\epsilon_{2}$, the labor and capital wedges that are independent to firm types. For example, one can consider the following processes:
\begin{align*}
    \log\sigma_{1t}&=\rho_1\log\sigma_{1t-1}+\varepsilon_{1t}\\
    \log\sigma_{2t}&=\rho_2\log\sigma_{2t-1}+\varepsilon_{2t}
\end{align*}
where $\varepsilon_{1t}\sim N\left(0,\sigma_{l}^{2}\right)$, and $\varepsilon_{2t}\sim N\left(0,\sigma_{k}^{2}\right)$.

From equation (\ref{eqm:jobdist2}), it follows that the variances of $\epsilon_1$ and $\epsilon_2$---distortions uncorrelated with firm type $\theta$---do not affect $\lambda_t$. Consequently, shocks to these variances have no impact on the dispersion of wages or TFPQ, nor do they alter coefficients such as $\eta^Q$ or $\eta_{\theta t}^Q$. However, as indicated by equation (\ref{var_tfpr}), such shocks can influence the dispersion of TFPR: Increases in $\sigma_{1t}$ or $\sigma_{2t}$ raise the dispersion of revenue productivity across firms.

\section{Concluding Remarks}\label{sec:conclude}
This paper develops a tractable model in which heterogeneous firms and workers sort in the labor market, allowing productivity and wages to emerge endogenously from the allocation of talent across firms. We introduce a reduced-form market efficiency shock that affects high-type firms more strongly and show analytically how it generates opposing cyclical movements in wage and productivity dispersions---wage dispersion rises in booms while productivity dispersion narrows---mirroring key empirical patterns. We further explore how alternative structural shocks, including aggregate productivity and second-moment shocks, shape these distributions and their joint dynamics. Taken together, the framework provides a unified way to interpret business-cycle fluctuations in both wage inequality and productivity dispersion, reconciling several empirical regularities within a single, highly tractable theoretical framework.

A limitation of our approach is that market distortions enter in reduced-form through firm-specific wedges on labor and capital prices, while market efficiency shocks are modeled as time variation in these wedges. This modeling choice preserves analytical tractability and allows us to derive sharp, closed-form implications for the joint behavior of wage and productivity dispersions. An important direction for future work is to embed the same sorting mechanism into a fully micro-founded environment with explicit market frictions, structural shocks, and firm entry and exit. Such a framework would necessarily rely on numerical solution methods but would permit a richer quantitative assessment of how alternative frictions and shocks shape the co-movement of wages, productivity, and other variables over the business cycle.

The analysis in this paper abstracts from unemployment and matching frictions: all workers are employed, and matches form without search. Incorporating unemployment and search-and-matching frictions would again require numerical solutions, but doing so would open an important avenue for future work. In the current framework, the focus is naturally on wage inequality, which reflects dispersion in \textit{labor income} among \textit{employed} workers. While wage inequality is an important and widely studied object, introducing unemployment would allow us to study an additional---and empirically crucial---dimension of income inequality. Recent evidence suggests that income inequality is countercyclical, with unemployment playing a central role.\footnote{See, for example, \cite{heathcote2020}.} 
Understanding the sources of disparity between wage inequality and income inequality is a promising direction for future research.

 \bibliographystyle{chicago}

\bibliography{references}

\appendix

\newpage


\setcounter{figure}{0}
\renewcommand{\thefigure}{A.\arabic{figure}}

\setcounter{table}{0}
\renewcommand{\thetable}{B.\arabic{table}}

\setcounter{section}{0}
\setcounter{subsection}{0}

\setcounter{footnote}{0}

\begin{center}
    \Large\textbf{Online Appendix} \\
   \vspace{0.6em}
   \large\textit{}
   \vspace{1.5em}
\end{center}

\section{Omitted Proofs and Derivations}\label{app: proofs}
\shownow{lem: firmsolution}

\shownow{lem:jobdist}

\shownow{unknown}

\shownow{lem:lambda}

\shownow{app:shocklambda}

\end{document}